\newcommand{\field}[1]{\ensuremath{\mathbb{#1}}}
\newcommand{\R}{\ensuremath{\field{R}}} 
\newcommand{\I}[1]{\ensuremath{\mathbb{I}_{\left\{#1\right\}}}} 
\newcommand{\tends}{\ensuremath{\rightarrow}} 
\newcommand{\E}{\ensuremath{\mathsf{E}}} 
\newcommand{\defeq}{\ensuremath{\triangleq}}
\newcommand{\Ascr}{\ensuremath{\mathcal A}}
\newcommand{\minimize}{\ensuremath{\mathop{\mathrm{minimize}}\limits}}
\newcommand{\maximize}{\ensuremath{\mathop{\mathrm{maximize}}\limits}}
\declaretheoremstyle[headfont=\sffamily\bfseries,bodyfont=\itshape]{thm-sf}
\declaretheorem[style=thm-sf]{theorem}
\declaretheorem[style=thm-sf]{assumption}
\crefname{assumption}{assumption}{assumptions}
\declaretheorem[style=thm-sf]{lemma}
\renewcommand{\thmcontinues}[1]{\hyperref[#1]{continued}}
\newcommand{\paraheader}[1]{\smallskip\noindent{\sffamily\bfseries #1}}
\tikzstyle{every picture} += [>=stealth]
\tikzset{axis/.style={semithick, line join=miter}}
\def\@seccntformat#1{\csname the#1\endcsname.\quad}
\newcommand{\emailhref}[1]{\href{mailto:#1}{\tt #1}} 
\newcommand{\hidefastcompile}[1]{\ifthenelse{\boolean{fastcompile}}{}{#1}}
\definecolor{orange}{rgb}{0.85,0.33,0.13} 
\definecolor{green}{rgb}{0.13,0.85,0.33}
\definecolor{purple}{rgb}{0.33,0.13,0.85}
\definecolor{lime}{rgb}{0.65,0.85,0.13}
\definecolor{blue}{rgb}{0.13,0.65,0.85}
\pgfplotsset{colormap={tricolormap}{color=(orange) color=(green) color=(purple)},
  colormap={quadcolormap}{color=(orange) color=(lime) color=(blue) color=(purple)}}
\pgfplotsset{compat=1.15}
  \renewcommand{\todo}[2][1]{}
  \newcommand{\deledit}[1]{}
  \newcommand{\deledit}[1]{{\color{orange} \sout{#1}}}
\newcommand{\LVR}{\ensuremath{\mathsf{LVR}}\xspace}
\newcommand{\ARB}{\ensuremath{\mathsf{ARB}}\xspace}
\newcommand{\bARB}{\ensuremath{\overline{\mathsf{ARB}}}\xspace}
\newcommand{\VOL}{\ensuremath{\mathsf{VOL}}\xspace}
\newcommand{\bVOL}{\ensuremath{\overline{\mathsf{VOL}}}\xspace}
\newcommand{\FT}{\ensuremath{\mathsf{FT}}\xspace}
\newcommand{\LVF}{\ensuremath{\mathsf{LVF}}\xspace}
  \title{}
  \author{}
  \date{}
  \title{\bf\sffamily Loss-Versus-Fair: Efficiency of Dutch Auctions on Blockchains\thanks{The
      first author is supported by the Briger Family Digital Finance Lab at Columbia Business
      School, and is an advisor to Paradigm and to fintech companies.
      The authors thank Eric Budish, Max Resnick, and Anthony
      Zhang for helpful comments.
    }}
  \author{
    Ciamac C.\ Moallemi \\ Columbia University / Paradigm \\
    email: \emailhref{ciamac@gsb.columbia.edu}
    \and
    Dan Robinson \\ Paradigm \\ email: \emailhref{dan@paradigm.xyz}
  }
  \date{Initial version: May 22, 2024 \\
  Current version: July 26, 2024}
\begin{document}
\maketitle
\singlespacing

\begin{abstract}
  Milionis et al.~(2023)
  studied the rate at which automated market makers leak value to arbitrageurs when block times
  are discrete and follow a Poisson process, and where the risky asset price follows a geometric
  Brownian motion. We extend their model to analyze another popular mechanism in decentralized
  finance for onchain trading: Dutch auctions. We compute the expected losses that a seller incurs
  to arbitrageurs and expected time-to-fill for Dutch auctions as a function of starting price,
  volatility, decay rate, and average interblock time. We also extend the analysis to gradual
  Dutch auctions, a variation on Dutch auctions for selling tokens over time at a continuous
  rate. We use these models to explore the tradeoff between speed of execution and quality of
  execution, which could help inform practitioners in setting parameters for starting price and
  decay rate on Dutch auctions, or help platform designers determine performance parameters like
  block times.
\end{abstract}

\onehalfspacing

\section{Introduction}

Two of the most popular mechanisms for smart contracts to trade tokens are automated market makers
(AMMs) --- in which the price is determined by the contract's reserves --- and Dutch auctions --- in
which the price is determined by the current time.

When block times are discrete, both of these mechanisms leak some value to
arbitrageurs. \citet{milionis2023automated} studied the rate of this value leakage for AMMs, which
is closely related to the concept of ``loss-versus-rebalancing,'' or \LVR \citep{lvr2022}. We
apply a similar analysis to Dutch auctions, deriving a closed form for their ``loss-versus-fair''
(\LVF) --- the expected loss to the seller relative to selling their asset at its contemporaneous
fair price --- as well as their expected time-to-fill. We also do a similar analysis for gradual
Dutch auctions, a variation on Dutch auctions that supports selling tokens at a constant rate over
an extended period of time.

We hope this analysis can help inform practitioners in parameterizing these auctions (e.g.,
choosing the initial price and decay rate) to trade off execution quality with speed of execution,
as well as helping spur research on how to design variants of Dutch auctions that are more resistant to
\LVF.

\paraheader{Dutch auctions.} Also known as descending price auctions, Dutch auctions are auctions
in which an item is listed at a high price that is gradually decreased over time until a bidder
accepts.

Dutch auctions are a commonly-used mechanism in blockchain-based applications, thanks to their
simplicity and efficiency in an environment with high transaction costs, limited privacy, and
pseudonymous identities. Unlike ascending-price or sealed bid auctions, Dutch auctions typically
require only one transaction after they start --- the winning bid, with price as a function of the
block number. This means that failed bidders typically do not need to pay transaction fees such as
``gas'' for their bids or leak any information about their intents.\footnote{One exception is that
  if other bidders attempt to submit a bid at around the same time as the winning bid, their
  transaction may be publicized and/or included on chain after the winning transaction, possibly
  paying fees.} Similarly, the seller needs only consider the single, winning bid, resulting in a
significant reduction in communication and computation complexity versus other auction formats
such as first- or second-price auctions. Dutch auctions are also strongly shill-proof
\citep{komo2024shill}: the seller has no incentive to submit any fake or shill bids to change the
auction outcome.

For these reasons, Dutch auctions have been used for a variety of applications in decentralized
finance:
\begin{itemize}
\item Liquidations in peer-to-pool lending protocols like Maker \citep{makerdaoliquidation}
  or
  Ajna \citep{ajnawhitepaper}
\item Rolling loans and discovering interest rates in peer-to-peer lending protocols like Blend
  \citep{blend}
\item Routing trades in request-for-quote (RFQ) protocols like
  UniswapX \citep{adams2023uniswapx}
  and 1inch Fusion \citep{1inchfusion}
\item Collecting and converting fees in protocols like Euler \citep{eulerv2}
\end{itemize}

Dutch auctions can be used both for price discovery of highly illiquid assets like NFTs, and for
automated execution between liquid assets. Here, we focus on the latter case, and specifically on
auctions between highly liquid but volatile pairs of tokens, such as between ETH and stablecoins
like USDC. In particular, we assume a \emph{common value} setting where all potential buyers agree
on the value of the asset being sold at any point in time (because, for example, the asset may be
liquidly traded in other off-chain markets), and assume the price of the asset obeys \emph{geometric Brownian motion}.

\paraheader{Gradual Dutch auctions.}  Gradual Dutch auctions (GDAs) are a variation on Dutch
auctions that were introduced by \citet{gda2022} as a mechanism for selling NFTs or tokens at a
constant target rate over an extended time period. ``Continuous gradual Dutch auctions'' (CGDAs),
the kind we consider in this paper, could be thought of as a series of infinitesimal Dutch
auctions of a fungible token, with new auctions being initiated at a linear rate over time, and
each auction independently decaying in price at an exponential rate.

\paraheader{Arbitrage profits.}  Dutch auctions have some drawbacks when implemented on a
blockchain. In particular, since blocks only arrive at discrete times, the true market price of
the asset at the time a block is created may be higher than the price offered by the auction, due
to the decay of the Dutch auction price and the drift and volatility of the asset. This means the
seller should expect to sell the asset at a discount to the market price or fair value at the time
of sale, with the profits going to arbitrageurs or whoever is able to capture value from ordering
transactions in the block --- a type of maximal extractable value (MEV).

This type of loss is similar to the ``quote-sniping'' losses of market makers in high-frequency
trading models \citep{budish2015high}, or the ``loss-versus-rebalancing''  suffered by liquidity
providers on automated market makers, a concept introduced by \citet{lvr2022}. In
\citet{milionis2023automated}, \LVR was extended to incorporate discrete blocks. For
analytic tractability, block generation times are assumed to be from a Poisson process, an
assumption we also make here.

\paraheader{Contributions.}
In this paper, we apply a similar model to Dutch auctions and  gradual Dutch
auctions. Given certain parameters for a Dutch auction --- volatility, drift, starting price, decay
rate, and average block arrival times --- we derive closed-form expressions for both the losses to
fair value and expected time-to-fill. We also extend the analysis to gradual Dutch auctions,
showing how expected losses to arbitrageurs and expected sales rate vary as a function of these
parameters.

For both Dutch auctions and gradual Dutch auctions, as long as the auction starts above the
current price, \LVF is given by the following expression (where $\delta$ is the decay rate of the auction plus the asset's drift in log space, $\sigma$ is the volatility of the asset, and $\Delta t$ is the mean interblock time):

\[
    \LVF_+ = \frac{1}{1+\frac{\delta}{\sigma^2} \left( \sqrt{1 + \frac{2 \sigma^2}{\delta^2 \Delta t} }
      - 1\right)}.
\]
For example, if volatility is $5\%$ per day ($0.017\%$ per second), decay rate is $0.01\%$ per second, and average block time is 12 seconds, $\LVF_+$ is about $0.13\%$. This would mean that for every $\$100$ worth of tokens that they sell, the seller should expect to get about $\$99.87$.

\todo{add more stuff, asymptotic behavior}

For regular Dutch auctions, the amount of time to fill if the starting price of the
auction is higher than the current price is given by the following formula, in which $z_0$ is the
(log) difference between the starting price and the current price:
\[
  \FT(z_0) = \frac{z_0}{\delta} +
  \frac{\Delta t}{2} \left( 1 + \sqrt{1 + \frac{2 \sigma^2}{\delta^2 \Delta t}} \right).
\]
For example, with the same parameters as above, and with starting price $0.1\%$ higher than the
current price, the expected time to fill is about 23.3 seconds.  For gradual Dutch auctions, we
find a closed form expression for the rate at which the asset is sold over time.

We also find closed forms for $\LVF$ and $\FT$ in the cases where the starting price of the auction is below the current price.

These models show how changing the decay rate of the auction affects both speed of execution and
expected loss, helping inform practitioners who want to trade off between those values when
choosing auction parameters such as initial price and decay rate. They also show how the
characteristics of the blockchain --- particularly average block time --- affect the efficiency of
Dutch auctions. For example, the formula for $\LVR_+$ above satisfies the lower bound
\[
  \LVF_+ \geq \frac{1}{1 + \frac{1}{\sigma \sqrt{\Delta t / 2}}}
  \approx \sigma \sqrt{\Delta t / 2},
\]
where the approximation holds for $\Delta t$ small (the ``fast block'' regime).
This suggests that if a platform wants to
support Dutch auctions that lose less than 2 basis points for assets with daily volatility of
5\%, it will need to have block times of less than 2.75 seconds.

\subsection{Literature Review}

Dutch auctions have been analyzed extensively in the auction theory and mechanism design
literature, since at least the work of \citet{vickrey1961counterspeculation}, who showed the
strategic equivalence of Dutch auctions and first-price sealed-bid auctions under certain
assumptions.

Our approach is related to barrier-diffusion approaches \citep{hasbrouck2007empirical} to limit
order pricing. For example, in \citet{LO200231}, the time-to-fill for a limit order is modelled as
the first-passage time for a geometric Brownian motion with drift, and solve for the distribution
of this time. Mathematically, this is equivalent to a continuous time version of our
model.\footnote{In our setting, the drift arises from descending price of the Dutch auction, while
  for \citet{LO200231}, the limit order is at a static price and the drift arises from the
  underlying asset price process.} Crucially, they do not consider loss-versus-fair for a limit
order, since this quantity would be zero in continuous time. \citet{latency2009} consider
frictions introduced by latency in submitting limit orders, also using a barrier-diffusion
model. In that setting, latency acts as a friction that limits the ability of an agent to update
their limit orders in a timely fashion, in reaction to changing market conditions. The central novelty of the present paper is the blockchain setting: we analyze frictions
restricting the ability to trade in the auction introduced by the discrete block generation
process. To our knowledge, no prior work has modeled the behavior of Dutch auctions for geometric
Brownian motion assets with discrete block generation times.

The idea of gradual Dutch auctions was proposed by \citet{gda2022}. \citet{vrgda2022} proposed an
extension on the idea, variable rate GDAs, in which the target sales rate could vary as a function
of time. Gradual Dutch auctions could be thought of as similar to automated market makers (AMMs)
for which the price impact function is an exponential function, the fee to buy is $0$,
the fee to sell is infinite, and the asset price has a negative drift. In this way, we build on
the setting of \citet{milionis2023automated} in computing arbitrage profits for AMMs with fees.


A version of the GDA mechanism was studied by \citet{kulkarni2023credibility}. That work considers
the use of discrete GDAs for illiquid NFTs where buyers depend on private signals for valuation,
rather than continuous GDAs for highly liquid fungible tokens driven by common valuations.


\section{Model}

We imagine a scenario where an agent is selling\footnote{While we focus on the case of an agent
  selling the asset via a Dutch auction, our model also applies to the case of an agent buying via
  an ascending price Dutch auction-style mechanism. In that case, the mechanism would have a
  steadily increasing bid price at which it is willing to buy the asset, and analogous formulas
  could be obtained. Note that over longer time horizons, the two cases are not completely
  symmetric because of the positivity of prices and the inherent asymmetry of geometric Brownian
  model. In particular, for example, for a seller \LVF is bounded above by 100\% since the sale
  price will always be bounded below by zero, while \LVF is unbounded above for a buyer, since the
  buy price is unbounded above.}  a risky asset via a descending price Dutch auction in a common
value setting. Following the model of \citet{milionis2023automated}, we assume there exists the
common fundamental value or price $P_t$ at time $t$ that follows a geometric Brownian motion price
process,
\begin{equation}\label{eq:gbm}
  \frac{dP_t}{P_t} = \mu\, dt + \sigma\, dW_t,
\end{equation}
where $\{W_t\}$ is a Brownian motion, and the process is parameterized by drift $\mu$ and
volatility $\sigma>0$.

The agent is progressively willing to lower their offered price. Let $A_t$ denote the lowest price the
agent is willing to sell at, at time $t$, i.e., the best ask price. We assume $A_t$ decreases
exponentially according to\footnote{The spirit here is to model a Dutch auction where the price is
  decreasing at a constant rate. We specifically choose exponentially decreasing prices (i.e.,
  prices that are decreasing at a constant relative rate) because it matches well with the
  geometric Brownian motion price dynamics \eqref{eq:gbm}. An alternative choice would be to
  assume the ask price decreases linearly and that the price process is a arithmetic Brownian
  motion. On the short timescales of practical interest, this would yield similar results both
  quantitatively and qualitatively to the model here. To see this, note that, under our model,
  $A_t = A_0 e^{- \lambda t} \approx A_0 \left( 1 - \lambda t \right)$, for $t$ small.}
\begin{equation}\label{eq:dAdt}
\frac{dA_t}{A_t} = - \lambda \, dt,
\end{equation}
with decay constant $\lambda > 0$. Define the log mispricing process $z_t \defeq \log(A_t/P_t)$,
so that, applying It\^o's lemma,
\[
  dz_t = -\underbrace{\left(\lambda + \mu - \tfrac{1}{2} \sigma^2 \right)}_{\defeq \delta}\, dt
  + \sigma \, dW_t.
\]

We assume that blocks are generated according to a Poisson process\footnote{For a proof-of-work
blockchain, Poisson block generation is a natural assumption \citep{nakamoto2008bitcoin}. However,
modern proof-of-stake blockchains typically generate blocks at deterministic times. In these
cases, we will view the Poisson assumption as an approximation that is necessary for tractability.}
of rate
$\Delta t^{-1}$, where $\Delta t > 0$ is the mean interblock time. We assume there is a population
of ``arbitrageurs'', or traders informed about the fundamental price $P_t$, who will buy from the
agent at any discount to this price. However, these agents can only act at block generation times.

Thus, if $\tau$ is a block generation time, and\footnote{We assume that the processes
  $\{A_t,P_t\}$ are right continuous with left limits, and define $A_{\tau-}$ and $P_{\tau-}$ to
  be the left limits, i.e., the values immediately before the time $\tau$.}
$A_{\tau-} < P_{\tau-}$, arbs trade until there is
no marginal profit, and so that the ask price updates with $A_\tau = P_\tau$ and $z_\tau = 0$. Thus, we have
$z_{\tau} = \max( 0, z_{\tau_-} )$.  Then, $\{ z_t \}$ is a Markov jump diffusion
process. Since it involves the interaction of a diffusive process $\{z_t\}$ with a barrier
($z_t=0$), our model falls into the general class of barrier-diffision models for market
microstructure \citep{hasbrouck2007empirical}.

This model is grounded in the typical way that blocks are built on decentralized blockchains
today, in which each block is built by a ``proposer'' or ``miner''. We imagine that the Dutch
auction is implemented via a smart contract that sets the minimum acceptable price as a function
of either block time or the block height. Within a block, the first transaction willing to
pay the price will succeed. In the case that there are multiple buyers (as might be expected if
the publicly observable fair value exceeds the limit price of the auction at the time), they
would compete for earlier block position by offering priority fees to the proposer. We assume
that each block proposer is independent and short-term profit-maximizing,\footnote{Note that if
  this assumption is violated --- such as if a single proposer controls multiple blocks in a row
  --- they may be able to extract additional profit at the expense of the seller. As of this
  writing, extraction of this kind of ``multi-block MEV'' is generally believed to be rare on
  major blockchains like Ethereum, although there are reasons to be concerned that it could
  increase in the future.} and hence they would include and prioritize the top priority-fee-paying
transaction in the block, allowing that buyer to win the trade. In this case some or all of the
arbitrage profits may accrue not to the buyer, but instead to the proposer in the form of
priority fees. Our focus in this paper, however, is on quantifying the loss to the seller and not
how it is distributed.

We will make the following assumption:
\begin{assumption}\label{as:stationary}
  Assume that $\delta \defeq \lambda + \mu - \tfrac{1}{2} \sigma^2 > 0$.
\end{assumption}
This assumption is sufficient to ensure that trade occurs with probability $1$, and necessary so
that the expected time to trade is finite. It can be satisfied by the agent making a
sufficiently large choice of the decay rate $\lambda$.  Under \Cref{as:stationary}, the following
lemma gives the stationary distribution $\pi(z)$ of $z_t$:\footnote{While applied in a different
  context, \Cref{lem:stationary} is a special case of Theorem~7 of \citet{milionis2023automated} up
  to a sign change, with $\gamma_-\tends \infty$. For completeness, a standalone proof is provided
  in \Cref{sec:proofs}.}
\begin{lemma}\label{lem:stationary}
  If $\delta > 0$, the process $z_t$ is an ergodic process on $\R$, with unique invariant distribution
  $\pi(\cdot)$ given by the density
  \[
    p_\pi(z) =
    \begin{cases}
      \pi_+ \times p^{\exp}_{\zeta_+}(z) & \text{if $z \geq 0$}, \\
      \pi_- \times p^{\exp}_{\zeta_-}(-z) & \text{if $z < 0$},
    \end{cases}
  \]
  for $z \in \R$. Here, $p^{\exp}_{\zeta}(z) \defeq \zeta e^{-\zeta z}$ is the density of an
  exponential distribution over $z\in\R_+$ with parameter $\zeta>0$. The parameters
  $\{\zeta_\pm\}$ are given by
  \[
    \zeta_- \defeq \frac{\delta}{\sigma^2} \left( \sqrt{1 + \frac{2 \sigma^2}{\delta^2 \Delta t} }
      - 1\right),\quad
    \zeta_+ \defeq \frac{2  \delta}{ \sigma^2 }.
  \]
  The probabilities $\{ \pi_\pm\}$ are given by
  \[
  \pi_-  \defeq \pi\big((-\infty,0)\big)  = \delta \Delta t \zeta_-,
  \quad
  \pi_+  \defeq \pi\big([0,+\infty)\big)  = 1 - \delta \Delta t \zeta_-.
\]
\end{lemma}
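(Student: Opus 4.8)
The plan is to identify $p_\pi$ as the unique normalized solution of the stationary forward (Fokker--Planck) equation attached to the generator of the jump--diffusion $\{z_t\}$, and then to argue ergodicity separately so that this solution really is the unique invariant law. Between Poisson events $z_t$ evolves as $dz_t=-\delta\,dt+\sigma\,dW_t$, while at rate $\Delta t^{-1}$ it is reset by $z\mapsto\max(0,z)$, so for smooth compactly-supported $f$ the generator is
\[
  \mathcal{L}f(z)=-\delta f'(z)+\tfrac12\sigma^2 f''(z)+\tfrac{1}{\Delta t}\bigl[f(\max(0,z))-f(z)\bigr],
\]
where the reset term vanishes for $z\ge 0$ and equals $\tfrac1{\Delta t}[f(0)-f(z)]$ for $z<0$.

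I would write the invariance condition $\int \mathcal{L}f\, p_\pi\,dz=0$ and integrate by parts. The diffusion part produces $\tfrac12\sigma^2 p''+\delta p'$, while the reset part produces $-\tfrac1{\Delta t}p$ on $\{z<0\}$ together with a point source $\tfrac{\pi_-}{\Delta t}\delta_0$ recording the mass deposited at $0$, where $\pi_-=\int_{-\infty}^0 p_\pi$. It is cleanest to phrase this through the stationary probability current $J=-\delta p-\tfrac12\sigma^2 p'$: then $J'\equiv 0$ on $(0,\infty)$ and $J'=-\tfrac1{\Delta t}p$ on $(-\infty,0)$, and decay at $\pm\infty$ forces $J\equiv 0$ on $(0,\infty)$ with $J(0^-)=-\pi_-/\Delta t$. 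On $(0,\infty)$, $J\equiv 0$ gives $\tfrac12\sigma^2 p'=-\delta p$, hence $p\propto e^{-\zeta_+ z}$ with $\zeta_+=2\delta/\sigma^2$; on $(-\infty,0)$ the density solves $\tfrac12\sigma^2 p''+\delta p'-\tfrac1{\Delta t}p=0$, and integrability as $z\to-\infty$ selects the positive root of $\tfrac12\sigma^2\zeta^2+\delta\zeta-\tfrac1{\Delta t}=0$, which simplifies to $\zeta_-$, giving $p\propto e^{\zeta_- z}$. This already yields the claimed piecewise-exponential form.

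To pin down the constants, note that the reset injects mass at the single point $0$ (a multiple of $\delta_0$, not of $\delta_0'$), so $p_\pi$ is continuous at $0$ with only a kink in $p_\pi'$. Continuity gives $\pi_+\zeta_+=\pi_-\zeta_-$, and together with normalization $\pi_++\pi_-=1$ this determines $\pi_\pm=\zeta_\mp/(\zeta_++\zeta_-)$. Using the characteristic identity $\tfrac12\sigma^2\zeta_-^2+\delta\zeta_-=\tfrac1{\Delta t}$ and $\delta=\tfrac12\sigma^2\zeta_+$, I would simplify $\pi_-=\zeta_+/(\zeta_++\zeta_-)$ to the stated $\pi_-=\delta\,\Delta t\,\zeta_-$; the same two identities make the flux-jump relation $J(0^+)-J(0^-)=\pi_-/\Delta t$ hold automatically, confirming that continuity plus normalization form a consistent (not over-determined) system and that $p_\pi$ genuinely solves $\mathcal{L}^*p_\pi=0$. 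For ergodicity I would use that the effective resets (Poisson events occurring while $z<0$) are regeneration times at which the state is deterministically $0$, so the excursions between them are i.i.d.; \Cref{as:stationary} ($\delta>0$) makes the drift push $z$ below $0$ and hence makes the mean regeneration time finite, giving positive recurrence. Standard regenerative-process theory (or a Foster--Lyapunov argument with $V(z)=e^{a|z|}$ for small $a>0$) then yields a unique invariant law, which by the computation above has density $p_\pi$.

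The main obstacle is the boundary analysis at $z=0$: correctly recognizing that the reset acts as a $\delta_0$ source (so that $p_\pi$ is continuous with a derivative kink rather than a jump), extracting the resulting matching conditions, and verifying that continuity, the flux jump, and normalization are mutually consistent rather than contradictory. Everything else reduces to solving two constant-coefficient linear ODEs, discarding the non-integrable modes, and algebraic simplification driven by the characteristic equation.
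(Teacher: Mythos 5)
Your proposal is correct, and the constants all check out: continuity $\pi_+\zeta_+=\pi_-\zeta_-$ together with normalization is consistent with the stated $\pi_-=\delta\,\Delta t\,\zeta_-$ precisely because of the characteristic identity $\tfrac12\sigma^2\zeta_-^2+\delta\zeta_-=\Delta t^{-1}$, and the flux-jump condition is then automatic, exactly as you claim. However, your route is genuinely different from the paper's. The paper works entirely in the weak formulation $\E_\pi[\Ascr f(z)]=0$, choosing two piecewise test functions (exponential on one half-line, linear on the other, glued to be $C^1$ at zero: $f_-(z)=e^{\alpha z}\I{z<0}+(1+\alpha z)\I{z\geq 0}$ and its mirror image); plugging these in directly produces the conditional Laplace transforms $\E_\pi[e^{\alpha z}\mid z<0]$ and $\E_\pi[e^{-\alpha z}\mid z\geq 0]$ as explicit rational functions of $\alpha$, from which the exponential laws with parameters $\zeta_\mp$ are read off, and $\pi_\pm$ follows by setting $\alpha=0$. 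That approach buys freedom from all boundary analysis: because the test functions are $C^1$ at zero, the matching conditions you derive by hand (continuity of $p_\pi$, the derivative kink from the $\delta_0$ source) never appear explicitly. Your adjoint/Fokker--Planck approach instead makes the boundary structure explicit and yields a built-in consistency check, at the cost of having to justify the distributional integration by parts. It also buys something the paper does not deliver: the paper's proof contains no argument for ergodicity or uniqueness (there is literally an unfinished placeholder at that point in \Cref{sec:proofs}), whereas your regeneration argument --- Poisson events occurring while $z<0$ are i.i.d.\ regeneration epochs at which the state is exactly $0$, with finite mean cycle length under \Cref{as:stationary} --- or the Foster--Lyapunov alternative, fills that gap. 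On that point alone your write-up is more complete than the paper's.
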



\section{Regular Dutch Auctions}\label{sec:rda}

We first consider the case of a discrete quantity of the risk asset for sale, initially at ask
price $A_0$, or, alternatively, initial log mispricing $z_0 \defeq \log(A_0/P_0)$, with the ask
price $A_t$ decreasing exponentially at rate $\lambda$ according to \eqref{eq:dAdt}. Suppose the
order is traded at fill time $\tau_F$, i.e., $\tau_F$ is the earliest block generation time which
satisfies $z_{\tau_F}\leq 0$. Then, the order will sell at price $A_{\tau_F}$
when the fundamental value is $P_{\tau_F}$. We are interested in the expected relative loss versus the
fundamental price or fair value, i.e.,
\[
\frac{P_{\tau_F} - A_{\tau_F}}{P_{\tau_F}} = 1 - e^{z_{\tau_F}}.
\]

\paraheader{Loss-versus-fair and time-to-fill.}
We are interested in the expected relative loss, which we call ``loss-versus-fair'' (\LVF), i.e.,
\[
\LVF(z_0) \defeq \E\left[ \left. 1 - e^{z_{\tau_F}} \right| z_0\right].
\]
We are also interested in the expected time-to-fill, i.e.,
\[
\FT(z_0) \defeq \E\left[ \left. \tau_F \right| z_0  \right].
\]
The following theorem characterizes these quantities:
\begin{theorem}\label{th:regular}
  If $z_0 \geq 0$, the expected relative loss and time-to-fill are given by
  \begin{equation}\label{eq:lvf-p}
    \LVF(z_0) = \frac{1}{1+\zeta_{-}}
    =
    \frac{1}{1+\frac{\delta}{\sigma^2} \left( \sqrt{1 + \frac{2 \sigma^2}{\delta^2 \Delta t} }
      - 1\right)}
    \defeq \LVF_+,
  \end{equation}
  \begin{equation}\label{eq:ft-p}
    \FT(z_0) = \frac{z_0}{\delta} +
    \frac{\Delta t}{2} \left( 1 + \sqrt{1 + \frac{2 \sigma^2}{\delta^2 \Delta t}} \right).
  \end{equation}
  If $z_0 < 0$, then
  \begin{equation}\label{eq:lvf-n}
    \LVF(z_0)
    =
      1 -
      \frac{e^{z_0}}{1 + \Delta t \left( \delta - \tfrac{1}{2} \sigma^2  \right) }
      + \left(
        \frac{1}{1+\zeta_{-}}
        -
        \frac{\Delta t \left( \delta - \tfrac{1}{2} \sigma^2 \right)}
        {1 + \Delta t \left(\delta -  \tfrac{1}{2}
          \sigma^2 \right)}
    \right)
    e^{\zeta_{-} z_0},
  \end{equation}
  \begin{equation}\label{eq:ft-n}
    \FT(z_0) = \frac{\Delta t}{2} \left( 2 +
      \left(  \sqrt{1 + \frac{2 \sigma^2}{\delta^2 \Delta t}} - 1 \right) e^{\zeta_{-} z_0} \right).
  \end{equation}
\end{theorem}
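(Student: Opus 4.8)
The plan is to convert both $\LVF$ and $\FT$ into boundary value problems for a single second-order ODE via a first-step (renewal) analysis over the first block. Condition on the first block time $T$, which is $\mathrm{Exp}(\Delta t^{-1})$ and independent of $\{z_t\}$. Writing $r \defeq \Delta t^{-1}$ and letting $\mathcal{L} \defeq -\delta\,\partial_z + \tfrac12\sigma^2\,\partial_z^2$ denote the generator of $\{z_t\}$ between blocks, the key analytic fact is that for bounded $\varphi$ the map $z \mapsto \E_z[\varphi(z_T)]$ solves $(r-\mathcal{L})u = r\varphi$; this follows from $\tfrac{d}{dt}\E_z[\varphi(z_t)] = \mathcal{L}\,\E_z[\varphi(z_t)]$ and an integration by parts against the density $re^{-rt}$. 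For $\LVF$ the first block contributes the terminal reward $g(z)\defeq 1-e^{z}$ when $z_T\le 0$ and the continuation value $\LVF(z_T)$ otherwise; for $\FT$ one additionally accrues $\E[T]=\Delta t$. This yields the fixed points $(r-\mathcal{L})\LVF = r\bigl[\,g\,\I{z\le 0} + \LVF\,\I{z>0}\,\bigr]$ and $(r-\mathcal{L})\FT = 1 + r\,\FT\,\I{z>0}$.

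Next I would solve these piecewise. On the continuation region $\{z>0\}$ the equations collapse to first order in $u'$, namely $\mathcal{L}\,\LVF = 0$ and $\mathcal{L}\,\FT = -1$; their bounded-at-$+\infty$ solutions are the constant $\LVF_+$ and the affine map $z/\delta + \text{const}$, the growing homogeneous mode $e^{\zeta_{+} z}$ (with $\zeta_{+}=2\delta/\sigma^2$) being discarded by boundedness. This already explains why \eqref{eq:lvf-p} is constant and why \eqref{eq:ft-p} is linear in $z_0$. On the fill region $\{z<0\}$ one has the genuinely second-order equations $(r-\mathcal{L})\LVF = r(1-e^{z})$ and $(r-\mathcal{L})\FT = 1$; a particular solution absorbs the source (producing the $1-e^{z}/(1+\Delta t(\delta-\tfrac12\sigma^2))$ piece of $\LVF$ and the constant $\Delta t$ for $\FT$), while the homogeneous part is governed by the characteristic equation $\tfrac12\sigma^2 m^2 - \delta m - r = 0$. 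Its two real roots $m_\pm$ straddle zero, and boundedness as $z\to-\infty$ retains only the positive root, which supplies the decaying exponential term appearing in \eqref{eq:lvf-n}–\eqref{eq:ft-n}.

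The remaining constants are fixed by gluing the branches at the barrier $z=0$. I would impose continuity of both $u$ and $u'$ there; the $C^1$ regularity is legitimate because $u = r(r-\mathcal{L})^{-1}\varphi$ is a resolvent applied to a bounded (albeit discontinuous) source, and the resolvent of a nondegenerate one-dimensional diffusion maps $L^\infty$ into $C^1$. This produces a $2\times 2$ linear system for the free coefficients, and the algebra is streamlined by the Vieta relations $m_{+}+m_{-} = 2\delta/\sigma^2 = \zeta_{+}$ and $m_{+}m_{-} = -2r/\sigma^2$, which tie the characteristic roots back to $\zeta_{+}$ and $\zeta_{-}$ and collapse the matched constants into the closed forms claimed. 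Finiteness of every expression, and of $\FT$ in particular, rests on \Cref{as:stationary} that $\delta>0$, which guarantees the negative drift needed for the fill to occur with probability one in finite expected time.

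The step I expect to be most delicate is precisely this treatment of the barrier: selecting the correct bounded characteristic root for the $\{z<0\}$ branch and verifying the $C^1$ gluing, since an incorrect exponential rate or a spurious slope discontinuity at $z=0$ would still reproduce the right value at $z=0$ and the right $z\to\pm\infty$ asymptotics while being wrong in the interior. As an independent check on the $z_0\ge 0$ case I would compute the law of the undershoot $z_{\tau_F}$ directly: the one-block increment of $z$ has moment generating function $m_{+}m_{-}/((\theta-m_{+})(\theta-m_{-}))$, an asymmetric two-sided exponential whose downward tail has rate $-m_{-}=\zeta_{-}$, so by memorylessness (Wiener–Hopf) the fill value $z_{\tau_F}$ is $\mathrm{Exp}(\zeta_{-})$ on $(-\infty,0]$ — consistent with the negative tail of the stationary law in \Cref{lem:stationary}. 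This gives $\E[e^{z_{\tau_F}}]=\zeta_{-}/(1+\zeta_{-})$ and hence $\LVF_+ = 1-\E[e^{z_{\tau_F}}] = 1/(1+\zeta_{-})$, matching \eqref{eq:lvf-p} and pinning down the constant $\LVF_+$ that also enters the $z_0<0$ formulas.
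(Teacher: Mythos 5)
Your route is genuinely different from the paper's. The paper stays probabilistic throughout: it reduces $z_0\ge 0$ to $z_0=0$ by a first-passage/strong-Markov argument, writes the first-block conditioning as an explicit exponential-times-Gaussian double integral, and solves the resulting scalar fixed-point equations for $\LVF_+$ and $\FT(0)$ by direct integration, obtaining the $z_0<0$ formulas by one further integration. You instead convert the same first-block identity into a resolvent equation $(r-\mathcal{L})u=r\varphi$ and solve a piecewise ODE with $C^1$ gluing. Your renewal equations are correct, your analysis on $\{z>0\}$ is correct (one small point: $\FT$ is unbounded, so you need an a priori linear-growth bound, not boundedness, to discard the $e^{\zeta_+ z}$ mode), and your Wiener--Hopf verification of $\LVF_+=1/(1+\zeta_-)$ via the $\mathrm{Exp}(\zeta_-)$ undershoot is correct and agrees with \eqref{eq:lvf-p}.

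The gap is exactly at the step you flagged as delicate. Write $r\defeq\Delta t^{-1}$ and $s\defeq\sqrt{\delta^2+2\sigma^2 r}$. The positive root of your characteristic equation $\tfrac12\sigma^2 m^2-\delta m-r=0$ is
\[
  m_+=\frac{\delta+s}{\sigma^2}=\zeta_-+\zeta_+,
  \qquad\text{whereas}\qquad
  -m_-=\frac{s-\delta}{\sigma^2}=\zeta_-,
\]
so the decaying homogeneous mode on $\{z<0\}$ is $e^{(\zeta_-+\zeta_+)z}$, \emph{not} $e^{\zeta_- z}$. Your claim that the positive root ``supplies the decaying exponential term appearing in \eqref{eq:lvf-n}--\eqref{eq:ft-n}'' is therefore false as written: carried out faithfully (your own Vieta relations give $m_+=\zeta_++\zeta_-$), your method produces the formulas of \eqref{eq:lvf-n}--\eqref{eq:ft-n} with identical coefficients but with exponent $\zeta_-+\zeta_+$ in place of $\zeta_-$; e.g.\ $\FT(z_0)=\Delta t+\tfrac{1}{\delta m_+}\,e^{m_+ z_0}$ for $z_0<0$. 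So the proposal does not prove the theorem as stated.

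You should know, however, that every check favors your exponent over the printed one, so the discrepancy appears to be an error in the paper's unshown ``after integration'' step rather than a defect of your method. Integrating the Gaussian transition density against $re^{-r\tau}$ shows the one-block increment $-\delta T+\sigma W_T$, $T\sim\mathrm{Exp}(r)$, has two-sided exponential density $\tfrac{r}{s}e^{-m_+ y}$ for $y>0$ and $\tfrac{r}{s}e^{\zeta_- y}$ for $y<0$; hence for $z_0<0$ the no-fill event at the first block has probability $\tfrac{r}{s m_+}e^{m_+ z_0}$, and substituting $\FT(z)=z/\delta+\FT(0)$ into the paper's own first-block integral yields decay rate $m_+$, not $\zeta_-$. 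Your $C^1$ criterion discriminates decisively: with exponent $m_+$ one gets $\FT'(0^-)=1/\delta=\FT'(0^+)$ and $\LVF'(0^-)=0=\LVF'(0^+)$, while the stated \eqref{eq:ft-n} gives $\FT'(0^-)=(s-\delta)^2/(2r\delta\sigma^2)\neq 1/\delta$ for every $r>0$ (the stated formulas also fail the first-block renewal identity numerically). In short: fix the root identification and add the growth bound, and your argument is complete and correct --- but it establishes a corrected version of the $z_0<0$ case, with $\zeta_-$ replaced by $\zeta_-+\zeta_+=\tfrac{\delta}{\sigma^2}\bigl(\sqrt{1+\tfrac{2\sigma^2}{\delta^2\Delta t}}+1\bigr)$, the paper having apparently swapped the upward tail rate $(\delta+s)/\sigma^2$ for the downward rate $(s-\delta)/\sigma^2$.
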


The formulas of \Cref{th:regular} are illustrated for representative parameter choices in
\Cref{fig:lvf-ft}.

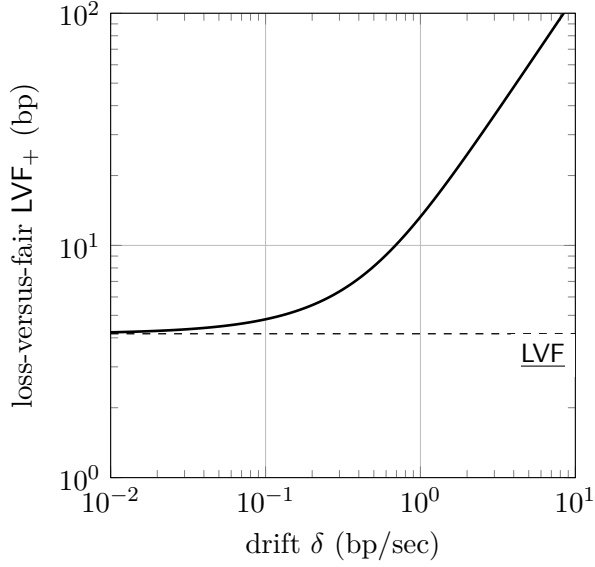
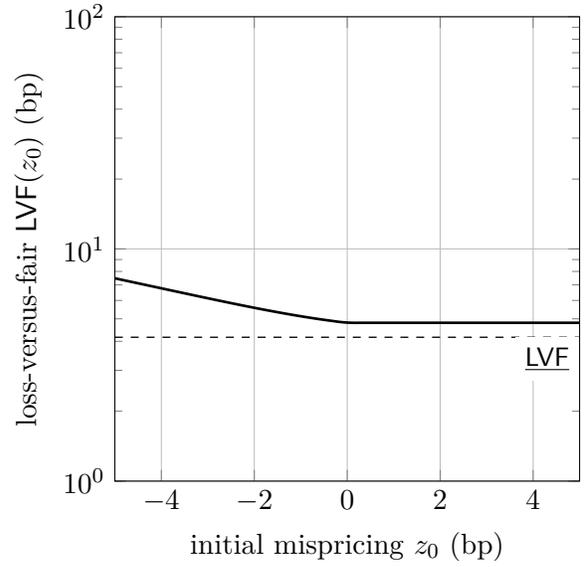
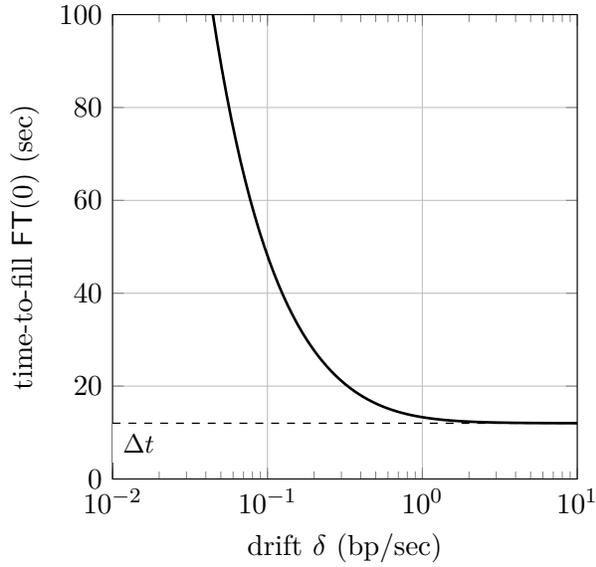
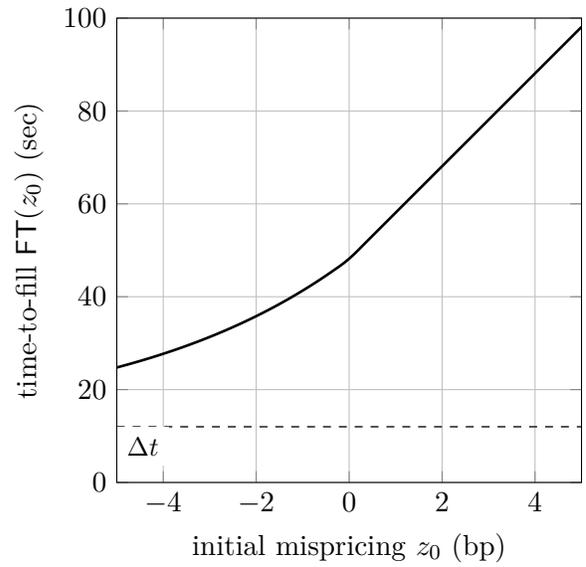
\begin{figure}
  \centering
  \begin{subfigure}[t]{0.47\textwidth}
    \begin{tikzpicture}
      \begin{axis}[
        xlabel={drift $\delta$ (bp/sec)},
        ylabel={loss-versus-fair $\LVF_+$ (bp)},
        xmode=log,
        ymode=log,
        ymin=1,
        ymax=100,
        xmin=0.01,
        xmax=10,
        width=\textwidth,
        height=\textwidth,
        grid=major,
        ]

        \addplot
        [mark=none,line width=1pt,smooth]
        table
        [x expr=\thisrow{delta}*1e4,y expr=\thisrow{lvf}*1e4, col sep=comma,mark=none]
        {figures/vary-delta.csv};


        \addplot[mark=none,dashed,line width=0.5pt,domain=0.01:10] {4.1649312786339027}
        node [pos=1,below left,rectangle,fill=white] {\small $\underline{\LVF}$};
      \end{axis}
    \end{tikzpicture}
    \caption{$\LVF_+=\LVF(0)$ as a function of $\delta$.
      \label{fig:lvf-delta}}
  \end{subfigure}
  \qquad
  \begin{subfigure}[t]{0.47\textwidth}
    \begin{tikzpicture}
      \begin{axis}[
        xlabel={initial mispricing $z_0$ (bp)},
        ylabel={loss-versus-fair $\LVF(z_0)$ (bp)},
        ymode=log,
        ymin=1,
        ymax=100,
        xmin=-5,
        xmax=5,
        width=\textwidth,
        height=\textwidth,
        grid=major,
        ]

        \addplot
        [mark=none,line width=1pt,smooth]
        table
        [x expr=\thisrow{z0}*1e4,y expr=\thisrow{lvf}*1e4, col sep=comma,mark=none]
        {figures/vary-z0.csv};

        \addplot[mark=none,dashed,line width=0.5pt,domain=-5:5] {4.1649312786339027}
        node [pos=1,below left,rectangle,fill=white] {\small $\underline{\LVF}$};
      \end{axis}
    \end{tikzpicture}
    \caption{$\LVF(z_0)$ as a function of $z_0$, assuming a fixed value of
      $\delta=0.1\text{ (bp/sec)}$.
      \label{fig:lvf-v0}}
  \end{subfigure}
  \\
  \begin{subfigure}[t]{0.47\textwidth}
    \begin{tikzpicture}
      \begin{axis}[
        xlabel={drift $\delta$ (bp/sec)},
        ylabel={time-to-fill $\FT(0)$ (sec)},
        xmode=log,
        ymin=0,
        ymax=100,
        xmin=0.01,
        xmax=10,
        width=\textwidth,
        height=\textwidth,
        grid=major,
        ]

        \addplot
        [mark=none,line width=1pt,smooth]
        table
        [x expr=\thisrow{delta}*1e4,y expr=\thisrow{ft}, col sep=comma,mark=none]
        {figures/vary-delta.csv};

        \addplot[mark=none,dashed,line width=0.5pt,domain=0.01:10] {12}
        node [pos=0,below right,rectangle,fill=white] {\small $\Delta t$};
      \end{axis}
    \end{tikzpicture}
    \caption{$\FT(0)$ as a function of $\delta$.
      \label{fig:ft-delta}}
  \end{subfigure}
  \qquad
  \begin{subfigure}[t]{0.47\textwidth}
    \begin{tikzpicture}
      \begin{axis}[
        xlabel={initial mispricing $z_0$ (bp)},
        ylabel={time-to-fill $\FT(z_0)$ (sec)},
        ymin=0,
        ymax=100,
        xmin=-5,
        xmax=5,
        width=\textwidth,
        height=\textwidth,
        grid=major,
        ]

        \addplot
        [mark=none,line width=1pt,smooth]
        table
        [x expr=\thisrow{z0}*1e4,y expr=\thisrow{ft}, col sep=comma,mark=none]
        {figures/vary-z0.csv};

        \addplot[mark=none,dashed,line width=0.5pt,domain=-5:5] {12}
        node [pos=0,below right,rectangle,fill=white] {\small $\Delta t$};
      \end{axis}
    \end{tikzpicture}
    \caption{$\FT(z_0)$ as a function of $z_0$, assuming a fixed value of
      $\delta=0.1\text{ (bp/sec)}$.
      \label{fig:ft-v0}}
  \end{subfigure}

  \caption{Comparison of \LVF and \FT for different parameter choices. These figures assume
    $\sigma=5\%\text{ (daily)}$ and $\Delta t = 12\text{ (sec)}$.
    The dashed lines correspond to the lower bounds of \eqref{eq:lvf-bound} and
    \eqref{eq:ft-bound}.
    \label{fig:lvf-ft}}
\end{figure}

\paraheader{Discussion of loss-versus-fair.}  Observe that, for $z_0 \geq 0$, the loss is given by
$\LVF(z_0)=\LVF_+$ and does not depend on the initial mispricing $z_0$. This is because, starting
at $z_0 \geq 0$, the mispricing process must first pass through the boundary $z_t=0$, since it is
continuous. If we denote by $\tau_0$ the first passage time of that boundary, because $\{z_t\}$ is
a Markov process and block generation times are memoryless, we have that
$\LVF(z_0) = \E[\LVF(z_{\tau_0})] = \LVF(0) = \LVF_+$. For $z_0 < 0$, $\LVF(z_0)$ is strictly
decreasing in $z_0$. This is intuitive: the more the asset is initially underpriced, the larger
the expected losses experienced upon the eventual sale.

Now, consider properties of the loss $\LVF_+$. Observe that this is a strictly increasing function
of the mispricing $\delta$, so that it is minimized when $\delta=0$, and we have the lower bound
\begin{equation}\label{eq:lvf-bound}
  \underline{\LVF} \defeq \frac{1}{1 + \frac{1}{\sigma \sqrt{\Delta t/2}}} \leq \LVF_+
  \leq \LVF(z_0).
\end{equation}
In general, setting as small a value of the drift $\delta$ as possible minimizes losses. However,
the left side of \eqref{eq:lvf-bound} yields a lower bound on the loss that is due intrinsic
volatility and discrete blocks. Indeed, in the fast block regime, when the average interblock time
$\Delta t$ is small, this lower bound takes the form
\[
  \underline{\LVF} \defeq \frac{1}{1 + \frac{1}{\sigma \sqrt{\Delta t/2}}} \approx \sigma \sqrt{\Delta t/2},
\]
which is the standard deviation of changes in the mispricing process over half of a typical
interblock time. This is a minimum, unavoidable level of loss, no matter what choice of auction
parameters $(z_0,\delta)$ is made.

We can also consider the behavior of $\LVF_+$ as a function of the volatility $\sigma$. It is
straightforward to see that $\LVF_+$ is increasing in $\sigma$, and hence is lower bounded by the
value as $\sigma\tends 0$, i.e.,
\begin{equation}\label{eq:lvf-bound2}
\LVF_+ \geq \lim_{\sigma \tends 0} \LVF_+ = \frac{1}{1 + \frac{1}{\delta \Delta t}} \approx \delta
\Delta t,
\end{equation}
where the final approximation holds in the fast block regime when $\Delta t$ is small. The lower
bound on the right side of \eqref{eq:lvf-bound2} is the price decay over a
single block. In the fast block regime, this is a lower bound on $\LVF_+$, which also includes the
impact of volatility.

\paraheader{Discussion of time-to-fill.}  For the time-to-fill, observe that $\FT(z_0)$ is a
strictly increasing function of the initial mispricing $z_0$ and a strictly decreasing function of
the drift $\delta$, and that
\begin{equation}\label{eq:ft-bound}
  \FT(z_0) \geq \lim_{z\tends -\infty} \FT(z) = \Delta t.
\end{equation}
This lower bound is intuitive: by the memoryless nature of the Poisson process, the time-to-fill is
always lower bounded by the mean interblock time.

$\FT(z_0)$ is also increasing as a function of the volatility $\sigma$, hence we have the lower
bound
\[
  \FT(z_0) \geq \lim_{\sigma \tends 0} \FT(z_0) = \frac{z_0}{\delta} + \Delta t.
\]
This bound is also intuitive: absent volatility, the ask price must first drift to the fair price
(in time $z_0/\delta$), and then wait for the next block (in time $\Delta t$).

\paraheader{Parameter optimization (known value).} \Cref{th:regular} can be applied to optimize
the initial auction price $A_0$ at time $t=0$ and the decay rate $\lambda$. When the initial value
$P_0$ is known, we will parameterize this decision with the variables
$z_0 \defeq \log(A_0/P_0)$ and $\delta \defeq \lambda + \mu - \tfrac{1}{2} \sigma^2 >
0$. Then, the seller can solve the optimization problem
\[
  \minimize_{z_0,\delta \geq 0}\ \LVF(z_0) + \theta \cdot \FT(z_0).
\]
Here, $\theta \geq 0$ is a parameter that captures the trade off between minimizing loss and
time-to-fill. The efficient frontier of Pareto optimal outcomes with these two objectives can
be generated by varying $\theta$. An example of such an efficient frontier is illustrated in \Cref{fig:eff}.

Note that, in this setting, it is never optimal to pick $z_0 > 0$. This is because such a choice of
$z_0$ is Pareto dominated by setting $z_0 = 0$: in this case lowering the value of $z_0$ strictly
decreases $\FT(z_0)$, without increasing $\LVF(z_0)$. Indeed, with the representative parameter
choices of \Cref{fig:eff}, setting $z_0 \approx 0$ is typically optimal, i.e., the auction should
be started at the current fundamental value (when it is known).

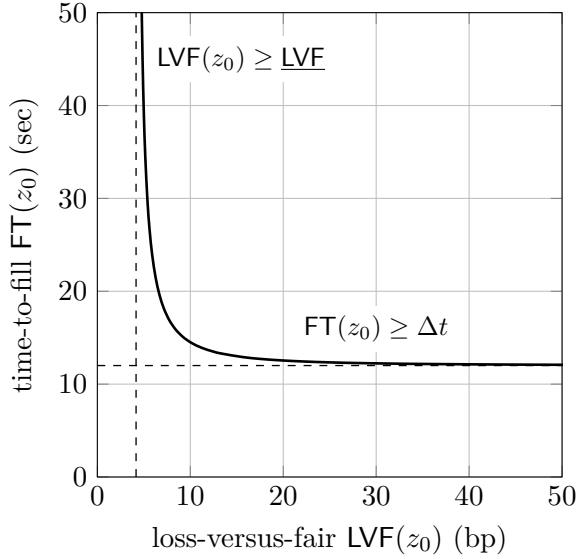
\begin{figure}
  \centering
  \begin{tikzpicture}
    \begin{axis}[
      xlabel={loss-versus-fair $\LVF(z_0)$ (bp)},
      ylabel={time-to-fill $\FT(z_0)$ (sec)},
      ymin=0,
      ymax=50,
      xmin=0,
      xmax=50,
      width=0.47\textwidth,
      height=0.47\textwidth,
      grid=both,
      ]



      \addplot
      [mark=none,line width=1pt,smooth]
      table
      [x expr=\thisrow{lvf}*1e4,y expr=\thisrow{ft}, col sep=comma,mark=none]
      {figures/efficient-frontier.csv};

      \addplot[mark=none,dashed,line width=0.5pt,domain=0:50] {12}
      node [pos=0.6,above=5pt,rectangle,fill=white] {\small $\FT(z_0) \geq \Delta t$};
      \addplot[mark=none,dashed,line width=0.5pt]
      coordinates {(4.1649312786339027,0) (4.1649312786339027,50)}
      node [pos=0.9,right=4pt,rectangle,fill=white] {\small $\LVF(z_0) \geq \underline{\LVF}$};
    \end{axis}
  \end{tikzpicture}
  \caption{The efficient frontier trading off loss-versus-fair and time-to-fill.
This figure assumes
    $\sigma=5\%\text{ (daily)}$ and $\Delta t = 12\text{ (sec)}$.
    The dashed lines correspond to the lower bounds of \eqref{eq:lvf-bound} and \eqref{eq:ft-bound}.
    \label{fig:eff}}
\end{figure}

\paraheader{Parameter optimization (unknown value).} Another setting of interest is where the
buyer is uncertain of the value $P_0$ when determining the auction parameters. We describe this
uncertainty with a lognormal Bayesian prior: assume the seller believes that
$P_0 \sim \hat P_0 e^{-\tfrac{1}{2} \sigma_0^2 + \sigma_0 Z}$, where $Z \sim N(0,1)$,
$\hat P_0 = \E[P_0]$ is the mean of the prior belief, and $\sigma_0 > 0$ is the volatility of the
prior belief. Then, we have $z_0 = \log A_0/\hat P_0 + \tfrac{1}{2} \sigma_0^2 - \sigma_0 Z$.
Then, the seller can compute the loss-versus-fair and time-to-fill efficient frontier by solving the
optimization problem
\[
  \minimize_{A_0,\delta \geq 0}\ \E\left[ \LVF\left(\log A_0/\hat P_0 + \tfrac{1}{2} \sigma_0^2 - \sigma_0
    Z\right) \right]
  + \theta \cdot \E\left[ \FT\left(\log A_0/\hat P_0 + \tfrac{1}{2} \sigma_0^2 - \sigma_0 Z\right) \right],
\]
for varying values of $\theta \geq 0$. Note that the expectations in the objective function can be
computed in closed form, these formulas are provided in \Cref{sec:parameter}.

\section{Gradual Dutch Auctions}\label{sec:gda}

In this section, we develop stationary, steady-state analogs of the results of \Cref{sec:rda} in
the context of gradual Dutch auctions. Introduced by \citet{gda2022}, the continuous gradual Dutch
auctions we consider here continously emit the risky asset for sale at a rate per unit time given by
$r > 0$. Each emission is in turn are sold through a Dutch auction where the price decreases
exponentially with decay rate $\lambda > 0$. Our goal will be to compute the steady-state rate at
which such auctions leak value to arbitrageurs, as well as the rate of trade. We will see a
similar tradeoff as in \Cref{sec:rda}.

In our stationary, steady-state setting, we will imagine that the seller has been continuously
emitting auctions since time $t=-\infty$. At any time $t$, if an auction has age $u$, the auction
price is given by $k e^{-\lambda u}$, for some constant $k > 0$. When the the age of the oldest
available auction is $T$, this auction defines the best ask price by $A_t=k e^{-\lambda T}$.
Hence, if an agent wishes to purchase a total quantity $q$ at time $t$, and the age of the oldest available
auction is $T$, the total cost is given by
\[
C_t(q) = \int_{T-q/r}^T ke^{-\lambda u} \cdot r\, dt = \frac{k r}{\lambda} \frac{e^{\lambda q/r} -
  1}{e^{\lambda T}} = A_t \cdot \frac{r}{\lambda} \left( e^{\lambda q/r} - 1 \right).
\]

Denote the block generation times by $0 < \tau_1 < \tau_2 < \cdots$. When a block is generated at
each time $t=\tau_i$, arbitrageurs can trade against the auctions, and will myopically seek to do
so to maximize their instananeous profit, assuming they value the risky asset at the current
fundamental price $P_t$. The following lemma characterizes this behavior:

\begin{lemma}\label{lem:myopic}
  Suppose a block is generated at time $\tau$, with current fundamental price given by $P \defeq P_\tau$, and
  mispricing (immediately before block generation) given by  $z \defeq z_{\tau-}$. Then, if
  $\lambda > 0$, the
  optimal arbitrage trade quantity of the risky asset is given by
  \[
    q^*(z) \defeq - \frac{r}{\lambda} z \I{z \leq 0},
  \]
  with optimal arbitrage profits (or, equivalently, the total loss experienced by the auction
  seller relative to selling at the current fair fundamental price $P$)
  \[
    A^*(P,z)  \defeq \frac{Pr}{\lambda} \left\{ e^z - 1 - z \right\}\I{z \leq 0}.
  \]
\end{lemma}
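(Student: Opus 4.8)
The plan is to treat the arbitrageur's action at a single block time $\tau$ as a one-shot, strictly concave maximization of instantaneous profit, and then read off both the optimal quantity $q^*$ and the optimal profit $A^*$ from the maximizer. The cost function $C_t(q)=A_t\cdot\frac{r}{\lambda}\bigl(e^{\lambda q/r}-1\bigr)$ has already been derived, so the only remaining work is to write the arbitrageur's objective correctly, express the leading constant $A_\tau$ in terms of the given data $(P,z)$, and optimize subject to the physical constraint $q\ge 0$.

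First I would record that, by continuity of both the fundamental price path $\{P_t\}$ (a geometric Brownian motion) and the ask price $\{A_t\}$ (exponentially decaying), we have $P_{\tau-}=P_\tau=P$ and $A_{\tau-}=A_\tau$, so that at the block time the best ask satisfies $A_\tau=Pe^{z}$ with $z=\log(A_{\tau-}/P_{\tau-})$. An arbitrageur who values the asset at $P$ and buys quantity $q\ge 0$ earns
\[
\Pi(q) \defeq P q - C_\tau(q) = P q - P e^{z}\,\frac{r}{\lambda}\left(e^{\lambda q/r}-1\right),
\]
using the cost formula with $A_\tau=Pe^z$. Since the seller receives $C_\tau(q)$ for an asset worth $Pq$, this same $\Pi(q)$ is the seller's loss relative to selling at $P$, which establishes the parenthetical equivalence in the statement.

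Next I would maximize $\Pi$ over $q\ge 0$. Because $\Pi''(q)=-Pe^{z}(\lambda/r)\,e^{\lambda q/r}<0$, the objective is strictly concave, so the first-order condition pins down the global maximizer on $\R$. Solving $\Pi'(q)=P\bigl(1-e^{z+\lambda q/r}\bigr)=0$ gives the unconstrained optimum $q=-(r/\lambda)z$, which is feasible (nonnegative) exactly when $z\le 0$. When $z>0$ we have $\Pi'(0)=P(1-e^{z})<0$, so the constrained optimum is the boundary point $q^*=0$. Combining the two cases yields $q^*(z)=-(r/\lambda)\,z\,\I{z\le 0}$, as claimed.

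Finally I would substitute back to obtain $A^*$. For $z>0$ the optimum is $q^*=0$ and the profit is $0$, matching the indicator. For $z\le 0$, using $\lambda q^*/r=-z$,
\[
\Pi(q^*) = -\frac{Prz}{\lambda} - P e^{z}\frac{r}{\lambda}\bigl(e^{-z}-1\bigr) = \frac{Pr}{\lambda}\bigl(e^{z}-1-z\bigr),
\]
which is the stated $A^*(P,z)$. There is no real obstacle beyond bookkeeping: this is routine strictly-concave optimization. The only points needing care are (i) importing $C_t$ and substituting $A_\tau=Pe^z$, and (ii) handling the constraint $q\ge 0$, which is precisely what produces the indicator $\I{z\le 0}$; note that $e^{z}-1-z$ and its $z$-derivative $e^{z}-1$ both vanish as $z\uparrow 0$, so $A^*$ joins the zero function on $z>0$ in a $C^1$ manner, consistent with the piecewise form.
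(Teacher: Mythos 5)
Your proposal is correct and follows essentially the same route as the paper: the paper sets up the identical maximization $\maximize_{q\ge 0}\ P_\tau q - C_{\tau-}(q) = P\{q - \tfrac{e^z r}{\lambda}(e^{\lambda q/r}-1)\}$ using $A_{\tau-}=P_\tau e^{z_{\tau-}}$ and then simply asserts that the result follows from first- and second-order conditions (noting, as you do, that $\lambda>0$ is what gives concavity), whereas you carry out those computations explicitly, including the corner analysis at $q=0$ for $z>0$ and the back-substitution yielding $\frac{Pr}{\lambda}(e^z-1-z)$. The only quibble is your appeal to ``continuity'' of the ask price at $\tau$ --- in the model $A_t$ jumps at block times when trades execute --- but since the objective is built from the pre-trade cost $C_{\tau-}$ in both treatments, this is purely notational and harmless.
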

\begin{proof}
  The arbitrageur faces the maximization problem
  \[
    \begin{array}{ll}
      \maximize_{q \geq 0} & \displaystyle P_\tau q - C_{\tau-}(q) =  P \left\{  q - \frac{e^z r}{\lambda} \left( e^{\lambda
                             q / r} - 1 \right)  \right\},
    \end{array}
  \]
  where we use the fact that $A_{\tau-} = P_\tau e^{z_{\tau-}}$.  The result follows from
  straightforward analysis of the first order and second order conditions for this optimization
  problem. Note that $\lambda > 0$ is required for the second order conditions (concavity).
\end{proof}

Denote by $N_T$ the total number of block generated over the time interval $[0,T]$. Suppose an
arbitrageur arrives at time $\tau_i$, observing external price $P_{\tau_i}$ and mispricing
$z_{\tau_i^{-}}$. From \Cref{lem:myopic}, the arbitrageur profit is given by
$A^*(P_{\tau_i},z_{\tau_i^{-}})$ and the trade size is given by $q^*(z_{\tau_i^{-}})$.  We can
write the total arbitrage profit and total quantity traded (measured in the num\'eraire) paid over
$[0,T]$ by summing over all arbitrageurs arriving in that interval, i.e.,
\[
  \ARB_T \defeq \sum_{i=1}^{N_T}
  A^*(P_{\tau_i},z_{\tau_i^{-}}),
  \quad
  \VOL_T \defeq \sum_{i=1}^{N_T}
  P_{\tau_i} q^*(z_{\tau_i^{-}}).
\]
Clearly these are non-negative and monotonically increasing jump processes. The following theorem
characterizes their instantaneous expected rate of growth or intensity:\footnote{Mathematically,
  \bARB is the intensity of the compensator for the monotonically increasing jump process $\ARB_T$
  at time $T=0$, similarly \bVOL is the intensity of the compensator for $\VOL_T$.}
\begin{theorem}[Rate of Arbitrage Profit and Volume]\label{th:arb-rate}
  Define the intensity, or instantaneous rate of arbitrage profit and volume, by
  \[
    \bARB  \defeq \lim_{T\tends 0} \frac{\E\left[ \ARB_T \right]}{T},
    \quad
    \bVOL  \defeq \lim_{T\tends 0} \frac{\E\left[ \VOL_T \right]}{T}.
  \]
  Given initial price $P_0=P$, suppose that $z_{0-}=z$ is distributed according to its stationary
  distribution $\pi(\cdot)$. Then, the instantaneous rate of arbitrage profit and volume are given by
  \begin{equation}\label{eq:arb-rate}
    \bARB
      = \frac{\E_\pi\left[  A^*(P,z)  \right]}{\Delta t}
      = \frac{P r \delta }{\delta - \mu + \tfrac{1}{2} \sigma^2} \times \frac{1}{1 + \zeta_-},
  \end{equation}
  \begin{equation}\label{eq:vol-rate}
    \bVOL
    = \frac{\E_\pi\left[  P q^*(z)  \right]}{\Delta t}
    =   \frac{Pr\delta}{\delta - \mu + \tfrac{1}{2} \sigma^2}.
  \end{equation}
\end{theorem}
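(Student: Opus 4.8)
The plan is to establish the two claimed identities in two stages: first reduce each intensity to the expectation of a single jump under the stationary law (the left-hand equalities in \eqref{eq:arb-rate} and \eqref{eq:vol-rate}), and then evaluate those expectations against the density $p_\pi$ of \Cref{lem:stationary} (the right-hand equalities).

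For the first stage I would use that block times form a Poisson process of rate $\Delta t^{-1}$ and that, since $P_t$ is continuous, the jump of $\ARB_T$ at a block time $\tau$ equals $A^*(P_{\tau-},z_{\tau-})$, a function of the left-limit state. Hence $\ARB_T$ admits the predictable compensator $\Lambda_T = \Delta t^{-1}\int_0^T A^*(P_s,z_{s-})\,ds$, so that $\E[\ARB_T] = \Delta t^{-1}\int_0^T \E[A^*(P_s,z_{s-})]\,ds$. Dividing by $T$ and letting $T \tends 0$, right-continuity of $s \mapsto \E[A^*(P_s,z_{s-})]$ at the origin gives $\bARB = \Delta t^{-1}\,\E[A^*(P_0,z_{0-})]$; since $P_0 = P$ is fixed and $z_{0-}\sim\pi$, this is exactly $\Delta t^{-1}\,\E_\pi[A^*(P,z)]$. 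The identical argument applied to $\VOL_T$, whose jumps are $P_\tau q^*(z_{\tau-})$, yields $\bVOL = \Delta t^{-1}\,\E_\pi[Pq^*(z)]$. Alternatively one can bypass the compensator by a direct small-$T$ expansion: the probability of exactly one block in $[0,T]$ is $\Delta t^{-1}T + o(T)$ and of two or more is $O(T^2)$, and conditioned on a single block the jump converges to $A^*(P,z)$ with $z\sim\pi$ by path continuity.

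For the second stage I would substitute the explicit forms from \Cref{lem:myopic} and integrate against $p_\pi$. Both $A^*$ and $q^*$ vanish for $z>0$, so only the left tail $p_\pi(z) = \pi_-\zeta_- e^{\zeta_- z}$ (for $z<0$) contributes. The change of variables $u=-z$ reduces everything to moments of an $\mathrm{Exp}(\zeta_-)$ law: using $\int_0^\infty e^{-u}\zeta_- e^{-\zeta_- u}\,du = \zeta_-/(1+\zeta_-)$, $\int_0^\infty \zeta_- e^{-\zeta_- u}\,du = 1$, and $\int_0^\infty u\,\zeta_- e^{-\zeta_- u}\,du = \zeta_-^{-1}$, I obtain
\[
  \E_\pi\big[(e^z - 1 - z)\I{z \leq 0}\big] = \pi_-\Big(\tfrac{\zeta_-}{1+\zeta_-} - 1 + \tfrac{1}{\zeta_-}\Big) = \frac{\pi_-}{\zeta_-(1+\zeta_-)}, \qquad \E_\pi\big[z\,\I{z \leq 0}\big] = -\frac{\pi_-}{\zeta_-}.
\]
Inserting $\pi_- = \delta\,\Delta t\,\zeta_-$ from \Cref{lem:stationary} makes the $\zeta_-$ factors cancel, giving $\E_\pi[A^*(P,z)] = (Pr/\lambda)\,\delta\,\Delta t/(1+\zeta_-)$ and $\E_\pi[Pq^*(z)] = (Pr/\lambda)\,\delta\,\Delta t$. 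Dividing by $\Delta t$ and rewriting $\lambda = \delta - \mu + \tfrac{1}{2}\sigma^2$ (from $\delta = \lambda + \mu - \tfrac12\sigma^2$) then reaches the stated closed forms.

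I expect the main obstacle to be the first stage rather than the integral computation: one must justify the compensator identity $\E[\ARB_T] = \E[\Lambda_T]$ (equivalently that $\ARB_T - \Lambda_T$ is a true martingale, which needs an integrability bound on $A^*$ under $\pi$, finite since the exponential tail $e^{\zeta_- z}$ dominates the at-most-linear growth of $e^z - 1 - z$), together with the exchange of limit and integral at $T=0$, which rests on continuity of $P_t$ and stationarity of $z_{0-}$. The remaining work is a routine evaluation of exponential integrals and a parameter substitution.
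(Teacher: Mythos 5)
Your proposal is correct and follows essentially the same route as the paper: the first stage is exactly the paper's use of the smoothing formula (the compensator identity, citing Br\'emaud) to reduce $\bARB$ and $\bVOL$ to $\E_\pi[A^*(P,z)]/\Delta t$ and $\E_\pi[Pq^*(z)]/\Delta t$, and the second stage matches the paper's evaluation of these expectations via the exponential left tail of $\pi$, the identity $\tfrac{\zeta_-}{1+\zeta_-}-1+\tfrac{1}{\zeta_-}=\tfrac{1}{\zeta_-(1+\zeta_-)}$, and the substitution $\pi_-=\delta\,\Delta t\,\zeta_-$, $\lambda=\delta-\mu+\tfrac12\sigma^2$. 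Your explicit attention to the martingale/integrability justification and the limit-integral exchange is a welcome refinement of steps the paper treats implicitly.
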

Comparing the instantaneous rate of arbitrage profit and volume given by
\eqref{eq:arb-rate}--\eqref{eq:vol-rate} with \Cref{th:regular}, we have that
\begin{equation}\label{eq:vol-lvf}
  \bARB = \bVOL \times \LVF_+.
\end{equation}
This expression highlights the fact a gradual Dutch auction can be viewed as a continuum of many
regular Dutch auctions, each of infinitesimal size, and each at a different price. From
\Cref{th:regular}, we know that the seller will incur the same expected relative loss per dollar
sold, $\LVF_+$, in each of these auctions. This loss is the same irrespective of the different
prices because all of the auctions start out-of-the-money ($z_0 \geq 0$). Equation
\eqref{eq:vol-lvf} intuitively decomposes the total arb profits per unit time as the product of
the dollar volume sold per unit time and the loss per dollar sold.

\paraheader{Parameter optimization.} As in \Cref{sec:rda}, we can leverage \Cref{th:arb-rate} to
optimize parameter choice in a gradual Dutch auction. In particular, a gradual Dutch auction is
parameterized by the choice of emission rate $r \geq 0$ and the choice of drift
$\delta \defeq \lambda + \mu - \tfrac{1}{2} \sigma^2$ satisfying $\delta \geq 0$ and $\lambda =
\delta - \mu + \tfrac{1}{2} \sigma^2 > 0$ (the second condition is required for concavity in \Cref{lem:myopic}).
This choice can be made to minimize the losses incurred while maximizing the rate
of trade. For example, consider the optimization problem
\[
  \minimize_{r > 0,\ \delta \geq \max(0,\mu - \tfrac{1}{2} \sigma^2)}\ \LVF_+ - \theta \cdot \bVOL,
\]
where $\theta \geq 0$ is a tradeoff parameter.


\section{Conclusion and Future Work}

While there has been an increasing amount of academic interest in studying, designing, and formalizing automated market makers for liquid assets in the blockchain context, there has been somewhat less attention paid to Dutch auctions, despite their popularity with protocol implementers. This paper was an attempt to bring the theoretical understanding of Dutch auctions in the setting of discrete block generation times closer to the current level of understanding that has been reached around automated market makers, particularly in \citet{lvr2022} and \citet{milionis2023automated}.

The paper also sought to provide a guide for application designers in setting parameters for Dutch
auctions, including deriving formulas that map the tradeoff between speed of execution and quality
of execution. The paper may also be helpful for platform designers in determining performance parameters like block times. For example, the rule of thumb in \Cref{eq:lvf-bound} suggests that if a platform wants to support Dutch auctions that lose less than 2 basis points for assets with daily volatility of 5\%, it will need to have block times of less than 2.75 seconds.

The model in this paper shared some of the limitations of the model in
\cite{milionis2023automated}, including not taking into account fixed transaction fees such as
``gas'' and use of a Poisson model for block generation as opposed to deterministic block
generation, which is more relevant for modern proof-of-stake blockchains. Further, a purely
diffusive, continuous process (geometric Brownian motion) has been used to model innovations in
the fundamental price process, while jumps are known to be an important component of
high-frequency price dynamics. This paper also assumes that block proposers are independent (and thus short-term profit-maximizing), rather than considering a model in which a single block proposer could acquire control over multiple blocks in a row and use that to extract ``multi-block MEV.''
Finally, while
this work quantified the losses inherent in Dutch auctions, it does not explore possible
alternative designs for Dutch auctions that might mitigate those losses without reducing the speed
of execution. We hope further work can explore this area.


{\small\singlespacing
  \bibliographystyle{plainnat}
  \bibliography{references}

\begin{thebibliography}{19}
\providecommand{\natexlab}[1]{#1}
\providecommand{\url}[1]{\texttt{#1}}
\expandafter\ifx\csname urlstyle\endcsname\relax
  \providecommand{\doi}[1]{doi: #1}\else
  \providecommand{\doi}{doi: \begingroup \urlstyle{rm}\Url}\fi

\bibitem[1inch(2022)]{1inchfusion}
1inch.
\newblock Fusion, 2022.
\newblock URL \url{https://docs.1inch.io/docs/fusion-swap/introduction/}.

\bibitem[Adams et~al.(2023)Adams, Zinsmeister, Toda, Williams, Wan, Leibowitz,
  Pote, Lin, Zhong, Yang, Campbell, Karys, and Robinson]{adams2023uniswapx}
Hayden Adams, Noah Zinsmeister, Mark Toda, Emily Williams, Xin Wan, Matteo
  Leibowitz, Will Pote, Allen Lin, Eric Zhong, Zhiyuan Yang, Riley Campbell,
  Alex Karys, and Dan Robinson.
\newblock Uniswapx, 2023.
\newblock URL \url{https://uniswap.org/whitepaper-uniswapx.pdf}.

\bibitem[Br{\'e}maud(2020)]{bremaud2020markov}
Pierre Br{\'e}maud.
\newblock \emph{{Markov} chains: {Gibbs} fields, {Monte Carlo} simulation, and
  queues}, volume~31.
\newblock Springer Science \& Business Media, 2nd edition, 2020.

\bibitem[Budish et~al.(2015)Budish, Cramton, and Shim]{budish2015high}
Eric Budish, Peter Cramton, and John Shim.
\newblock The high-frequency trading arms race: Frequent batch auctions as a
  market design response.
\newblock \emph{The Quarterly Journal of Economics}, 130\penalty0 (4):\penalty0
  1547--1621, 2015.

\bibitem[Euler(2024)]{eulerv2}
Euler.
\newblock Fee flow: reverse dutch auctions for fees, 2024.
\newblock URL
  \url{https://docs.euler.finance/euler-v2-lite-paper/#fee-flow-reverse-dutch-auctions-for-fees}.

\bibitem[Frankie et~al.(2022)Frankie, Robinson, White, and andy8052]{gda2022}
Frankie, Dan Robinson, Dave White, and andy8052.
\newblock Gradual {Dutch} auctions.
\newblock Technical report, Paradigm, 2022.
\newblock URL \url{https://www.paradigm.xyz/2022/04/gda}.

\bibitem[Galaga et~al.(2023)Galaga, Pacman, Toad, Robinson, and
  Transmissions11]{blend}
Galaga, Pacman, Toad, Dan Robinson, and Transmissions11.
\newblock Blend: Perpetual lending with nft collateral, 2023.
\newblock URL \url{https://www.paradigm.xyz/2023/05/blend}.

\bibitem[Hasbrouck(2007)]{hasbrouck2007empirical}
Joel Hasbrouck.
\newblock \emph{Empirical market microstructure: The institutions, economics,
  and econometrics of securities trading}.
\newblock Oxford University Press, 2007.

\bibitem[Komo et~al.(2024)Komo, Kominers, and Roughgarden]{komo2024shill}
Andrew Komo, Scott~Duke Kominers, and Tim Roughgarden.
\newblock Shill-proof auctions.
\newblock \emph{arXiv preprint arXiv:2404.00475}, 2024.

\bibitem[Kulkarni et~al.(2023)Kulkarni, Ferreira, and
  Chitra]{kulkarni2023credibility}
Kshitij Kulkarni, Matheus V.~X. Ferreira, and Tarun Chitra.
\newblock Credibility and incentives in gradual dutch auctions, 2023.
\newblock URL \url{https://people.eecs.berkeley.edu/~ksk/files/GDA.pdf}.

\bibitem[Lo et~al.(2002)Lo, MacKinlay, and Zhang]{LO200231}
Andrew~W. Lo, A.Craig MacKinlay, and June Zhang.
\newblock Econometric models of limit-order executions.
\newblock \emph{Journal of Financial Economics}, 65\penalty0 (1):\penalty0
  31--71, 2002.

\bibitem[MakerDAO(2022)]{makerdaoliquidation}
MakerDAO.
\newblock Liquidation 2.0 module, 2022.
\newblock URL
  \url{https://docs.makerdao.com/smart-contract-modules/dog-and-clipper-detailed-documentation}.

\bibitem[Milionis et~al.(2022)Milionis, Moallemi, Roughgarden, and
  Zhang]{lvr2022}
Jason Milionis, Ciamac~C. Moallemi, Tim Roughgarden, and Anthony~Lee Zhang.
\newblock Automated market making and loss-versus-rebalancing, 2022.
\newblock URL \url{https://arxiv.org/abs/2208.06046}.

\bibitem[Milionis et~al.(2023)Milionis, Moallemi, and
  Roughgarden]{milionis2023automated}
Jason Milionis, Ciamac~C Moallemi, and Tim Roughgarden.
\newblock Automated market making and arbitrage profits in the presence of
  fees.
\newblock \emph{arXiv preprint arXiv:2305.14604}, 2023.

\bibitem[Moallemi and Sa\u{g}lam(2013)]{latency2009}
C.~C. Moallemi and M.~Sa\u{g}lam.
\newblock The cost of latency in high-frequency trading.
\newblock \emph{Operations Research}, 61\penalty0 (5):\penalty0 1070--1086,
  2013.

\bibitem[Nakamoto(2008)]{nakamoto2008bitcoin}
Satoshi Nakamoto.
\newblock Bitcoin: A peer-to-peer electronic cash system.
\newblock Technical report, 2008.

\bibitem[Patel et~al.(2023)Patel, Noepel, Di~Prisco, Malzberg, Harvey,
  Quintilian, Cushman, and Hathaway]{ajnawhitepaper}
Akash Patel, Ed~Noepel, Gregory Di~Prisco, Howard Malzberg, Ian Harvey, Joseph
  Quintilian, Matthew Cushman, and Michael Hathaway.
\newblock Ajna protocol: Automated lending markets, 2023.
\newblock URL
  \url{https://www.ajna.finance/pdf/Ajna_Protocol_Whitepaper_01-11-2024.pdf}.

\bibitem[Transmissions11 et~al.(2022)Transmissions11, Frankie, and
  White]{vrgda2022}
Transmissions11, Frankie, and Dave White.
\newblock Variable rate {GDAs}.
\newblock Technical report, Paradigm, 2022.
\newblock URL \url{https://www.paradigm.xyz/2022/08/vrgda}.

\bibitem[Vickrey(1961)]{vickrey1961counterspeculation}
William Vickrey.
\newblock Counterspeculation, auctions, and competitive sealed tenders.
\newblock \emph{The Journal of Finance}, 16\penalty0 (1):\penalty0 8--37, 1961.

\end{thebibliography}
}

\appendix

\section{Proofs}\label{sec:proofs}

\begin{proof}[\bf\sffamily Proof of \Cref{lem:stationary}]
Note that $\{ z_t \}$ is a Markov jump diffusion process, with infinitesimal generator
\[
\Ascr f(z) = \tfrac{1}{2} \sigma^2 f''(z) - \delta f'(z)
+ \Delta t^{-1} \left[ f(0) - f(z) \right] \I{z < 0},
\]
given a test function $f\colon \R \tends \R$.

\todo{show stability}

The invariant distribution $\pi(\cdot)$ must satisfy
\begin{equation}\label{eq:stationary}
\E_\pi[\Ascr f(z)] = \int_{-\infty}^{+\infty} \Ascr f(z)\, \pi(dz) = 0,
\end{equation}
for all test functions $f \colon \R \rightarrow \R$.  We will guess that $\pi(\cdot)$ decomposes
according to two different densities over the positive and negative half line, and then compute
the conditional density on each segment via Laplace transforms using \eqref{eq:stationary}.

Consider the test function
\[
f_-(z) \defeq
\begin{cases}
    e^{\alpha z} & \text{if $z < 0$,} \\
    1 + \alpha z & \text{if $z \geq 0$.}
\end{cases}
\]
Then,
\[
\begin{split}
\Ascr f_-(z) & = \tfrac{1}{2} \sigma^2 \alpha^2 e^{\alpha z} \I{z < 0}
- \delta \alpha \left( e^{\alpha z} \I{z < 0} + \I{z \geq 0}   \right)
+ \Delta t^{-1} \left[ 1 - e^{\alpha z} \right] \I{z < 0},
\end{split}
\]
so that
\[
\begin{split}
    0 &  = \E_\pi\left[ \Ascr f_-(z) \right] \\
      & =
        \tfrac{1}{2} \sigma^2 \alpha^2 \pi_- \E_\pi\left[ \left. e^{\alpha z} \right| z < 0
        \right]
        -  \delta \alpha
        \left( \pi_- \E_\pi\left[ \left. e^{\alpha z} \right| z < 0 \right] + \pi_+   \right)
        + \Delta t^{-1} \pi_- \left(  1 - \E_\pi\left[ \left. e^{\alpha z}  \right| z < 0 \right]\right).
\end{split}
\]
Then,
\[
  \begin{split}
    \E_\pi\left[ \left. e^{\alpha z} \right| z < 0 \right]
    & =
      \frac{\delta \alpha \frac{\pi_+}{\pi_-} - \Delta t^{-1} }
      { \tfrac{1}{2} \sigma^2 \alpha^2  - \delta \alpha  - \Delta t^{-1}  }.
  \end{split}
\]
Observe the denominator has a single negative root. Then, $\pi(-z|z<0)$ must be exponential with
parameter
$\zeta_-\defeq \left( \sqrt{\delta^2 + 2 \Delta t^{-1} \sigma^2} - \delta \right) /
\sigma^2$. Also, note that
\[
\E_\pi[-z|z<0] = 1/\zeta_-.
\]

Next consider the test function
\[
f_+(z) \defeq
\begin{cases}
    e^{-\alpha z} & \text{if $z \geq 0$,} \\
    1 - \alpha z & \text{if $z < 0$.}
\end{cases}
\]
Then,
\[
\begin{split}
\Ascr f_+(z) & = \tfrac{1}{2} \sigma^2 \alpha^2 e^{-\alpha z} \I{z \geq 0}
+ \delta \alpha \left( e^{-\alpha z} \I{z \geq 0} + \I{z < 0}   \right)
+ \Delta t^{-1} \alpha z \I{z < 0},
\end{split}
\]
so that
\[
\begin{split}
    0 &  = \E_\pi\left[ \Ascr f_+(z) \right] \\
      & =
        \tfrac{1}{2} \sigma^2 \alpha^2 \pi_+ \E_\pi\left[ \left. e^{-\alpha z} \right| z \geq 0
        \right]
        +  \delta \alpha
        \left( \pi_+ \E_\pi\left[ \left. e^{-\alpha z} \right| z \geq 0 \right] + \pi_-   \right)
        + \Delta t^{-1} \alpha \pi_- \E_\pi\left[ \left. z  \right| z < 0 \right].
\end{split}
\]
Then,
\[
  \begin{split}
  \E_\pi\left[ \left. e^{-\alpha z} \right| z \geq 0 \right]
    & = -\frac{\pi_-}{\pi_+} \frac{\delta
     + \Delta t^{-1} \E_\pi\left[ \left. z  \right| z < 0 \right]}
      {\tfrac{1}{2} \sigma^2 \alpha   + \delta} \\
    & = -\frac{\pi_-}{\pi_+} \frac{\delta
     - \Delta t^{-1}/\zeta_-}
      {\tfrac{1}{2} \sigma^2 \alpha   + \delta} \\
  \end{split}
\]
Then, $\pi(z|z\geq 0)$ must be exponential with parameter $\zeta_+\defeq
2\delta/\sigma^2$. Substituting $\alpha = 0$, we have
\[
1 = -\frac{\pi_-}{\pi_+} \frac{\delta
     - \Delta t^{-1}/\zeta_-}
   {\delta}
   =
   -\frac{\pi_-}{1-\pi_-}
   \left(
     1
     -
   \frac{1}{\delta \Delta t \zeta_-}
    \right).
\]
Solving for $\pi_-$,
\[
  \pi_- = \delta \Delta t \zeta_-,
  \quad
  \pi_+ = 1 - \delta \Delta t \zeta_-.
\]

\end{proof}

\begin{proof}[\bf\sffamily Proof of \Cref{th:regular}]
  We consider $\LVF(z_0)$ and $\FT(z_0)$ separately.

  \paraheader{Loss-versus-fair.}  We begin with the \LVF calculation. First, consider the case
  where $z_0 \geq 0$. Define the $\tau_F$ to be the fill time of the order, i.e., the first
  Poisson block generation time $\tau_F$ with $z_{\tau_F} \leq 0$. Also define
  $\tau_0 \defeq \min\{ t \geq 0 \colon z_t = 0 \}$ to be first passage time for the boundary
  $z_t=0$. Since the the process the mispricing process is continuous, we must have that
  $\tau_F \geq \tau_0$. Then,
  \begin{equation}\label{eq:split-lvf-p}
    \begin{split}
      \LVF(z_0)
      & \defeq \E\left[ \left. 1 - e^{z_{\tau_F}} \right| z_0\right] \\
      & \stackrel{\textrm{(a)}}{=}
        \E\left[ \left. \E\left[ \left. 1 - e^{z_{\tau_F}} \right| \tau_0,z_{\tau_0} \right] \right|
        z_0\right] \\
      & \stackrel{\textrm{(b)}}{=}  \E\left[ \left. \LVF(z_{\tau_0}) \right| z_0\right] \\
      & \stackrel{\textrm{(c)}}{=} \LVF(0) \defeq \LVF_+.
    \end{split}
  \end{equation}
  where (a) follows from the tower property of expectation,  (b)
  follows from the fact that Poisson arrivals are memoryless and $\{ z_t\}$ is a Markov process,
  and (c) follows from the fact that $z_{\tau_0} = 0$.

  Now, consider arbitrary $z_0\in\R$. Let $\tau_B > 0$ be the first Poisson block generation
  time. Since $\tau_F \geq \tau_B$,
  \begin{equation}\label{eq:split-lvf-p-2}
    \begin{split}
      \LVF(z_0)
      & \defeq \E\left[ \left. 1 - e^{z_{\tau_F}} \right| z_0\right] \\
      & \stackrel{\textrm{(a)}}{=} \E\left[ \left.
        \E\left[ \left. 1 - e^{z_{\tau_F}}
        \right| \tau_B,z_{\tau_B}
        \right]
        \right| z_0\right] \\
      & \stackrel{\textrm{(b)}}{=}  \E\left[ \left.
        \LVF(z_{\tau_B})
        \right| z_0\right] \\
      & \stackrel{\textrm{(c)}}{=} \E\left[ \left.
        \LVF_+ \I{z_{\tau_B} \geq 0}
        +
        \left( 1 - e^{z_{\tau_B}} \right) \I{z_{\tau_B} < 0}
        \right| z_0\right] \\
      & \stackrel{\textrm{(d)}}{=}
        \int_{0}^\infty \frac{e^{-\tau/\Delta t}}{\Delta t}
        \int_0^{+\infty} \frac{1}{\sigma \sqrt{2\pi \tau}}
        e^{-\frac{1}{2} \left( \frac{z + \delta \tau - z_0}{\sigma \sqrt{\tau}}\right)^2}
        \LVF_+
        \, dz\, d\tau
      \\
      & \quad
        +
        \int_{0}^\infty \frac{e^{-\tau/\Delta t}}{\Delta t}
        \int_{-\infty}^0 \frac{1}{\sigma \sqrt{2\pi \tau}}
        e^{-\frac{1}{2} \left( \frac{z + \delta \tau - z_0}{\sigma \sqrt{\tau}}\right)^2}
        \left( 1 - e^{z} \right)
        \, dz\, d\tau.
      \\
    \end{split}
  \end{equation}
  where (a) follows from the tower property of expectation,  (b)
  follows from the fact that Poisson arrivals are memoryless and $\{ z_t\}$ is a Markov process,
  (c) follows from the fact that  $\LVF(z_{\tau_B})=\LVF_+$ for $z_{\tau_B} \geq 0$ while
  $\LVF(z_{\tau_B})=1-e^{z_{\tau_B}}$ if  $z_{\tau_B} < 0$, (d) follows from the fact that
  $\tau_B$ is exponentially distributed while, conditional on $\tau_B$, $z_{\tau_B}$ is normally
  distributed, and $\Phi(\cdot)$ is the cumulative normal distribution. Substituting in $z_0=0$
  and solving for $\LVF(z_0) = \LVF_+$, after integration, we obtain \eqref{eq:lvf-p}. For $z_0
  \leq 0$, we can substitute \eqref{eq:lvf-p} into \eqref{eq:split-lvf-p-2} and integrate to
  obtain \eqref{eq:lvf-n}.


\paraheader{Time-to-fill.}
Suppose we start out at $z_0 \geq 0$, and define $\FT(z_0)$ to be the expected fill time of the next
trade, i.e., the first Poisson arrival time $\tau$ with $z_{\tau_F} \leq 0$. Also define $\tau_0 = \min\{ t
  \geq 0 \colon z_t = 0 \}$ to be the first passage time for the boundary $z_t=0$. Then, since the mispricing process $\{z_t\}$
is continuous and Markov, and Poisson arrivals are memoryless, we have that $\tau_F \geq \tau_0$ and
\[
  \FT(z_0)
  = \E\left[ \tau_F | z_0 \right]
  = \E\left[ \tau_0 | z_0 \right]
  + \E\left[ \left. \E\left[ \left. \tau_F - \tau_0 \right| \tau_0,z_{\tau_0} \right]
      \right| z_0 \right]
  = \frac{z_0}{\delta} + \FT(0),
\]
where we have used the standard formula for expected first passage time of a Brownian motion with
drift.

Thus, we have reduced to the case where $z_0=0$. Define $\tau_B>0$ to be first Poisson block
generation time. If $z_{\tau_B} \leq 0$, then $\tau_B$ is also the fill time. On the other hand, if
$z_{\tau_B} > 0$, we will have to wait an additional amount after $\tau_B$ given in expectation by
$\FT(z_{\tau_B})=\FT(0)$. Thus, integrating over $\tau_B$, and using the fact that, given $\tau_B$,
$z_{\tau_B}$ is normally distributed,
\[
  \begin{split}
  \FT(0) & = \int_0^\infty \frac{e^{-\tau/\Delta t}}{\Delta t}
  \left( \tau
           + \int_0^\infty \frac{1}{\sigma \sqrt{2\pi \tau}}
           e^{-\frac{1}{2} \left( \frac{z + \delta \tau}{\sigma \sqrt{\tau}}\right)^2}
           \FT(z)\, dz \right)\, d\tau \\
    & = \Delta t +
      \int_0^\infty \frac{e^{-\tau/\Delta t}}{\Delta t}
      \int_0^\infty \frac{1}{\sigma \sqrt{2\pi \tau}}
      e^{-\frac{1}{2} \left( \frac{z + \delta \tau}{\sigma \sqrt{\tau}}\right)^2}
      \left(  \frac{z}{\delta} + \FT(0) \right)\, dz \, d\tau \\
    & = \Delta t +
      \int_0^\infty \frac{e^{-\tau/\Delta t}}{\Delta t}
      \left\{
      \int_0^\infty \frac{1}{\sigma \sqrt{2\pi \tau}}
      e^{-\frac{1}{2} \left( \frac{z + \delta \tau}{\sigma \sqrt{\tau}}\right)^2}
      \frac{z}{\delta} \, dz
      +
      \FT(0)
      \left( 1 - \Phi\left( \frac{\delta \sqrt{\tau}}{\sigma} \right) \right)
      \right\}\, d\tau \\
    & = \Delta t +
      \int_0^\infty \frac{e^{-\tau/\Delta t}}{\Delta t}
      \left\{
      \int_0^\infty \frac{1}{\sigma \sqrt{2\pi \tau}}
      e^{-\frac{1}{2} \left( \frac{z + \delta \tau}{\sigma \sqrt{\tau}}\right)^2}
      \frac{z}{\delta} \, dz
      +
      \FT(0)\,
      \Phi\left( -\frac{\delta \sqrt{\tau}}{\sigma} \right) \right\}
      \, d\tau. \\
  \end{split}
\]
We can solve this for $\FT(0)$, i.e.,
\[
  \FT(0) = \frac{\displaystyle
    \Delta t +
    \int_0^\infty \frac{e^{-\tau/\Delta t}}{\Delta t}
      \int_0^\infty \frac{1}{\sigma \sqrt{2\pi \tau}}
      e^{-\frac{1}{2} \left( \frac{z + \delta \tau}{\sigma \sqrt{\tau}}\right)^2}
      \frac{z}{\delta} \, dz \, d\tau
    }{\displaystyle
      1
      -
      \int_0^\infty \frac{e^{-\tau/\Delta t}}{\Delta t}
      \Phi\left( -\frac{\delta \sqrt{\tau}}{\sigma} \right)
    \, d\tau
  }
  =
    \frac{\Delta t}{2} \left( 1 + \sqrt{1 + \frac{2 \sigma^2}{\delta^2 \Delta t}} \right),
\]
where the final equality is obtained via integration. This establishes \eqref{eq:ft-p}.

Finally, consider the case where $z_0 < 0$.
Define $\tau_B>0$ to be first Poisson block
generation time. If $z_{\tau_B} \leq 0$, then $\tau_B$  is also the fill time. On the other hand, if
$z_{\tau_B} > 0$, we will have to wait an additional amount after $\tau_B$ given in expectation by
$\FT(z_{\tau_B})$. Thus,
\[
  \begin{split}
    \FT(z_0)
    & = \int_0^\infty \frac{e^{-\tau/\Delta t}}{\Delta t}
      \left( \tau
      + \int_0^\infty \frac{1}{\sigma \sqrt{2\pi \tau}}
      e^{-\frac{1}{2} \left( \frac{z + \delta \tau - z_0}{\sigma \sqrt{\tau}}\right)^2}
      \FT(z)\, dz \right)\, d\tau \\
    & = \Delta t +
      \int_0^\infty \frac{e^{-\tau/\Delta t}}{\Delta t}
      \int_0^\infty \frac{1}{\sigma \sqrt{2\pi \tau}}
      e^{-\frac{1}{2} \left( \frac{z + \delta \tau - z_0}{\sigma \sqrt{\tau}}\right)^2}
      \left(  \frac{z}{\delta} + \FT(0) \right)\, dz \, d\tau \\
    & = \Delta t +
      \int_0^\infty \frac{e^{-\tau/\Delta t}}{\Delta t}
      \left\{
      \int_0^\infty \frac{1}{\sigma \sqrt{2\pi \tau}}
      e^{-\frac{1}{2} \left( \frac{z + \delta \tau - z_0}{\sigma \sqrt{\tau}}\right)^2}
      \frac{z}{\delta} \, dz
      +
             \FT(0)\,
                     \Phi\left( - \frac{\delta \tau - z_0}{\sigma \sqrt{\tau}} \right)
              \right\}
      \, d\tau.
  \end{split}
\]
After integration, this yields \eqref{eq:ft-n}.
\end{proof}

\begin{proof}[\bf\sffamily Proof of \Cref{th:arb-rate}]
  We follow the method of \citet{milionis2023automated}.
  Specifically, using the smoothing
  formula, e.g., Theorem~13.5.7 of \citet{bremaud2020markov},
  \[
    \E\left[ \ARB_T \right]
    =
    \E\left[
      \sum_{i=1}^{N_T}
      A^*(P_{\tau_i},z_{\tau_i-})
    \right]
    =
    \E\left[
      \int_0^T
      A^*(P_{t},z_{t-})
      \, dN_t
    \right]
    =
    \E\left[
      \int_0^T
      A^*(P_{t},z_{t-})
      \cdot
      \Delta t^{-1}
      \, dt
    \right].
  \]
  Applying Tonelli's theorem and the fundamental theorem of calculus,
  \[
    \bARB \defeq
    \lim_{T\tends 0} \frac{\E\left[ \ARB_T \right]}{T}
    = \lim_{T\tends 0} \frac{1}{T}
    \int_0^T
    \E\left[
      \frac{A^*(P_{t},z_{t-})}{\Delta t}
    \right]
    \, dt
    =
    \frac{
    \E\left[
      A^*(P_{0},z_{0-})
    \right]
    }{\Delta t}
    =
    \frac{
    \E_\pi\left[
      A^*(P,z)
    \right]
    }{\Delta t}
    ,
  \]
  where in the final expression, $P_0=P$ and $z$ is distributed according to the stationary
  distribution $\pi(\cdot)$.

  Then, using the definition of $\pi(\cdot)$ from \Cref{lem:stationary} and $A^*(\cdot,\cdot)$ from
  \Cref{lem:myopic},
  \[
    \begin{split}
      \bARB
      &
        = \frac{1}{\Delta t} \pi(z | z \leq 0)
         \frac{Pr}{\lambda}
        \E_\pi\left[ \left. e^z - 1 - z \right| z \leq
        0\right]
      \\
      & =  \frac{Pr}{\lambda}
        \delta \zeta_{-}
        \left\{
        \frac{\zeta_{-}}{1 + \zeta_{-}} - 1 + \frac{1}{\zeta_{-}}
        \right\}
      \\
      & = \frac{Pr}{\lambda} \times \frac{\delta}{1 + \zeta_{-}},
    \end{split}
  \]
  as desired.

  The same argument establishes that
  \[
    \bVOL\defeq \lim_{T\tends 0} \frac{\E\left[ \VOL_T \right]}{T}
    = \frac{\E_\pi\left[  P q^*(z)  \right]}{\Delta t}.
  \]
  Then, using \Cref{lem:stationary} and $q^*(\cdot)$ from
  \Cref{lem:myopic},
  \[
    \begin{split}
      \bVOL
      & = \frac{1}{\Delta t} \pi(z | z \leq 0)
        \frac{Pr}{\lambda}
        \E_\pi\left[ \left. - z \right| z \leq 0\right]
      \\
      & = \frac{Pr}{\lambda} \times \delta,
    \end{split}
  \]
  as desired.
\end{proof}

\section{Formulas Under Fundamental Value Uncertainty}\label{sec:parameter}

Assume that the prior belief on the initial mispricing is normally distributed, i.e.,
$z_0\sim N(\mu_0,\sigma_0^2)$. Then, via direct integration,
\[
  \begin{split}
    \E\left[ \LVF\left(z_0\right) \right]
    & =
      \LVF_++\left(1-\LVF_+\right) \Phi\left(-\frac{\mu_{0}}{\sigma_0}\right)
      +\frac{ \Delta t^{-1} e^{\frac{\sigma_{0}^{2}}{2}}}{\frac{\sigma^{2}}{2}- \delta - \Delta t^{-1} } e^{\mu_{0}}
      \Phi\left(-\sigma_{0}-\frac{\mu_{0}}{\sigma_{0}}\right)
    \\
    & \quad
      +
      \Biggl\{
      \left(\LVF_+-\frac{\frac{\sigma^{2}}{2}- \delta }{\frac{\sigma^{2}}{2}- \delta - \Delta
      t^{-1} }\right)e^{\frac{ \Delta t^{-1} \sigma_{0}^{2}}{\sigma^{2}}+\frac{ \delta
      }{\sigma^{2}}\left(\frac{ \delta
      }{\sigma^{2}}\sigma_{0}^{2}+\mu_{0}\right)\left(\sqrt{1+\frac{2 \Delta t^{-1} \sigma^{2}}{
      \delta ^{2}}}+1\right)}
    \\
    & \qquad\qquad
      \times \Phi\left(-\frac{ \delta \sigma_{0}}{\sigma^{2}}\left(\sqrt{1+\frac{2 \Delta t^{-1} \sigma^{2}}{ \delta ^{2}}}+1\right)-\frac{\mu_{0}}{\sigma_{0}}\right)      \Biggr\}.
  \end{split}
\]
Similarly,
\[
  \begin{split}
    \E\left[ \FT\left(z_0\right) \right]
    & =
       \Delta t+\frac{\sigma_{0}}{ \delta \sqrt{2\pi}}e^{-\frac{\mu_{0}^{2}}{2\sigma_{0}^{2}}}
    \\
    & \quad
      +\left(\FT(0)- \Delta t+\frac{\mu_{0}}{ \delta
      }\right)\Phi\left(\frac{\mu_{0}}{\sigma_{0}}\right)
    \\
    & \quad
      +\left(\FT(0)- \Delta t\right)e^{2 \Delta t^{-1} \FT(0)\left(\frac{\sigma_{0}^{2} \delta ^{2}}{\sigma^{4}}+\frac{\mu_{0} \delta }{\sigma^{2}}\right)+\frac{ \Delta t^{-1} \sigma_{0}^{2}}{\sigma^{2}}}\Phi\left(-\frac{2 \Delta t^{-1} \FT(0)\sigma_{0} \delta }{\sigma^{2}}-\frac{\mu_{0}}{\sigma_{0}}\right).
  \end{split}
\]

\end{document}